\documentclass[11pt]{article}

\usepackage{amsfonts}
\usepackage{amsmath}
\usepackage{amssymb}

\usepackage{tikz}

\usepackage{subfigure}
\usepackage{captcont}

\usepackage[final,letterpaper]{hyperref}

\hypersetup{
pdfauthor = {Xiaoyang Gu, Jack H. Lutz, Elvira Mayordomo},
pdftitle = {Curves That Must Be Retraced},
pdfsubject = {Curves That Must Be Retraced},
pdfkeywords = {computable curve, rectifiable curve, computability, computable analysis},
pdfcreator = {LaTeX with hyperref package},
pdfproducer = {dvips + ps2pdf}}

\usepackage{amsfonts}
\usepackage{amsmath}
\usepackage{amssymb}
\usepackage{enumerate}


\newcommand{\PROOF}{\begin{proof}}

\newcommand{\QED}{\end{proof}}


\newcommand{\range}{\mathrm{range}}

\newcommand{\myset}[2]{ \left\{ #1 \;\left|\; #2 \right. \right\} }

\newcommand{\limn}{\lim\limits_{n\to\infty}}
\newcommand{\liminfn}{\liminf\limits_{n\to\infty}}


\newcommand{\N}{\mathbb{N}}
\newcommand{\Z}{\mathbb{Z}}
\newcommand{\Q}{\mathbb{Q}}
\newcommand{\R}{\mathbb{R}}






\newcommand{\K}{{\mathrm{K}}}






\newcommand{\CH}{\mathcal{H}}




\usepackage{amsthm}

\newtheorem{theorem}{Theorem}[section]
\newtheorem{corollary}[theorem]{Corollary}
\newtheorem{lemma}[theorem]{Lemma}
\newtheorem{proposition}[theorem]{Proposition}
\newtheorem*{claim}{Claim}

\newtheorem{observation}[theorem]{Observation}


\theoremstyle{definition}

\newtheorem*{definition}{Definition}
\newtheorem*{defn}{Definition}

\newtheorem{example}[theorem]{Example}

\newtheorem*{example*}{Example}
\newtheorem*{examples*}{Examples}

\newtheorem{construction}[theorem]{Construction}

\newtheorem*{ack}{Acknowledgment}


\theoremstyle{remark}


\numberwithin{equation}{section}
\numberwithin{figure}{section}

\oddsidemargin  0.0in
\evensidemargin 0.0in
\textwidth      6.5in
\headheight     0.0in
\topmargin      -0.2in
\textheight 	8.7in

\renewcommand{\include}{\input}

\newcommand{\card}[1]{\lvert #1 \rvert}
\newcommand{\hd} {d_{\mathrm{H}} }

\begin{document}

\title{ {\bf
Curves That Must Be Retraced}
}
\author{
Xiaoyang Gu
\footnote{Department of Computer Science, Iowa State University,
Ames, IA 50011, USA. Email: xiaoyang@cs.iastate.edu}
\footnotemark[3]
\and
Jack H. Lutz\footnote{Department of Computer Science, Iowa State University,
Ames, IA 50011, USA.  Email: lutz@cs.iastate.edu}
\footnote{Research supported in part by National Science Foundation
Grant 0344187, 0652569, and 0728806.} \footnotemark[5]
\and
Elvira Mayordomo
\footnote{Departamento de Inform\'atica e Ingenier\'ia de Sistemas,
Universidad de Zaragoza, 50018 Zaragoza, Spain.  Email: elvira@unizar.es}
\footnote{Research supported in part by the Spanish Ministry of Education and Science (MEC) and the European Regional Development Fund (ERDF) under project TIN2005-08832-C03-02.}
\footnote{Part of this author's research was performed during a visit at Iowa State University, supported by Spanish Government (Secretar\'{\i}a de Estado de Universidades e Investigaci\'on del Ministerio de Educaci\'on y Ciencia) grant for research stays PR2007-0368. }
}

\date{}

\maketitle

\begin{abstract}
We exhibit a polynomial time computable plane curve ${\bf \Gamma}$
that has finite length, does not intersect itself, and
is smooth except at one endpoint, but has the following property.
For every computable parametrization $f$ of ${\bf\Gamma}$ and
every positive integer $m$, there is some positive-length
subcurve of ${\bf\Gamma}$ that $f$ retraces at least $m$ times.
In contrast, every computable curve of finite length
that does not intersect itself has a constant-speed
(hence non-retracing) parametrization that is computable
relative to the halting problem.
\end{abstract}

\begin{section}{Introduction}\label{se:1}
A curve is a mathematical model of the path
of a particle undergoing continuous motion.
Specifically, in a Euclidean space $\R^n$, a curve
is the range $\Gamma$ of a continuous function
$f:[a,b]\rightarrow \R^n$ for some $a<b$. The function
$f$, called a {\em parametrization} of $\Gamma$, clearly
contains more information than the pointset $\Gamma$,
namely, the precise manner in which the
particle ``traces'' the points $f(t)\in\Gamma$ as
$t$, which is often considered a time parameter,
varies from $a$ to $b$.	When the particle's motion is algorithmically governed, the parametrization must be computable (as a function on the reals, see below).

This paper shows that the geometry of a
curve $\Gamma$ may force every
{\em computable} parametrization $f$ of $\Gamma$ to retrace
various parts of its path (i.e., ``go back and forth
along $\Gamma$'') many times, even when $\Gamma$ is an
efficiently computable, smooth, finite-length curve
that does not intersect itself. In fact, our main
theorem exhibits a plane curve ${\bf\Gamma}\subseteq \R^2$ with
the following properties.
\begin{enumerate}
\item
${\bf \Gamma}$ is {\em simple}, i.e., it does not
intersect itself.
\item
${\bf \Gamma}$ is {\em rectifiable}, i.e., it has finite
length.
\item
${\bf \Gamma}$ is {\em smooth except at one endpoint},
i.e., ${\bf \Gamma}$ has a tangent at every interior
point and a $1$-sided tangent at one endpoint,
and these tangents vary continuously along ${\bf \Gamma}$.
\item\label{pr:1_4}
${\bf \Gamma}$ is {\em polynomial time computable} in the strong
sense that there is a polynomial time computable
position function $\vec{s}:[0,1]\rightarrow\R^2$ such that the
velocity function $\vec{v} =\vec{s}'$ and the acceleration
function $\vec{a} =\vec{v}'$ are polynomial time computable;
the total distance traversed by $\vec{s}$ is finite;
and $\vec{s}$ parametrizes ${\bf \Gamma}$, i.e., $\range(\vec{s})={\bf \Gamma}$.
\item\label{pr:1_5}
${\bf \Gamma}$ {\em must be retraced} in the sense that
every parametrization $f:[a,b]\rightarrow\R^2$ of ${\bf \Gamma}$ that is computable
in {\em any} amount of time has the following property.
For every positive integer $m$, there exist disjoint,
closed subintervals $I_0, \dots, I_m$ of $[a,b]$
such that the curve $\Gamma_0=f(I_0)$ has positive
length and $f(I_i)=\Gamma_0$ for all $1\leq i\leq m$.
(Hence $f$ retraces $\Gamma_0$ at least $m$ times.)
\end{enumerate}

The terms ``computable'' and ``polynomial
time computable'' in properties \ref{pr:1_4} and \ref{pr:1_5} above
refer to the ``bit-computability'' model of
computation on reals formulated in the 1950s
by Grzegorczyk \cite{Grze55} and Lacombe \cite{Laco55},
extended to feasible computability in the 1980s by Ko and
Friedman \cite{KoFri82} and Kreitz and Weihrauch \cite{KreWei82},
and exposited in the recent paper by Braverman
and Cook \cite{BraCoo06} and the monographs
\cite{PouRic89,Ko91,Weihr00,BraYam08}. As will be shown here,
condition \ref{pr:1_4} also implies that the pointset ${\bf \Gamma}$
is polynomial time computable in the sense of
Brattka and Weihrauch \cite{BraWei99}. (See also
\cite{Weihr00,Brav05,BraCoo06}.)

A fundamental and useful theorem of
classical analysis states that every simple,
rectifiable curve $\Gamma$ has a {\em normalized constant-speed
parametrization}, which is a one-to-one parametrization
$f:[0,1]\rightarrow\R^n$ of $\Gamma$ with the property that
$f([0,t])$ has arclength $tL$ for all $0\le t\leq 1$,
where $L$ is the length of $\Gamma$. (A simple,
rectifiable curve $\Gamma$ has exactly two such
parametrizations, one in each direction, and
standard terminology calls either of these {\em the} normalized
constant-speed parametrization $f:[0,1]\rightarrow\R^n$ of $\Gamma$.
The constant-speed  parametrization is also called
the {\em parametrization by arclength} when it is
reformulated as a function $f:[0,L]\rightarrow\R^n$ that
moves with constant speed $1$ along $\Gamma$.) Since the
constant-speed parametrization does not retrace
any part of the curve, our main theorem
implies that this classical theorem is not entirely
constructive. Even when a simple, rectifiable
curve has an efficiently computable parametrization,
the constant-speed parametrization need not be
computable.

In addition to our main theorem, we
prove that every simple, rectifiable curve $\Gamma$
in $\R^n$ with a computable parametrization has
the following two properties.
\begin{enumerate}[I.]
\item\label{pr:1_6}
The length of $\Gamma$ is lower  semicomputable.
\item\label{pr:1_7}
The constant-speed parametrization of $\Gamma$
is computable relative to the length of $\Gamma$.
\end{enumerate}

These two things are not hard to prove if the computable parametrization is one-to-one,
(in fact, they follow from results of M\"uller and Zhao \cite{Muller:JAG} in this case)
but our results hold even when the computable
parametrization retraces portions of the curve many times.

Taken together, \ref{pr:1_6} and \ref{pr:1_7} have the following two consequences.
\begin{enumerate}
\item
The curve $\mathbf{\Gamma}$ of our main theorem
has a finite length that is lower semi-computable
but not computable. (The existence of polynomial-time
computable curves with this property was
first proven by Ko \cite{Ko95}.)
\item
Every simple, rectifiable curve $\Gamma$ in $\R^n$
with a computable parametrization has a
constant-speed parametrization that is $\Delta_2^0$-computable, i.e.,
computable relative to the
halting problem. Hence, the existence of
a constant-speed parametrization, while not
entirely constructive, is constructive relative
to the halting problem.
\end{enumerate}
\end{section}

\begin{section}{Length, Computability, and Complexity of Curves}\label{se:2}

In this section we summarize basic terminology and facts about curves.
 As we use the terms here, a {\sl curve} is the range $\Gamma$ of a continuous function $f: [a, b] \to \R^n$ for some $a<b$. The function $f$ is called a {\sl parametrization} of $\Gamma$. Each curve clearly has infinitely many parametrizations.

 A curve is {\sl simple} if it has a parametrization that is one-to-one, i.e., the curve ``does not intersect itself''. The length of a simple curve $\Gamma$ is defined as follows. Let $f: [a, b]\stackrel{1-1}{\to}\R^n$ be a one-to-one parametrization of $\Gamma$. For each {\sl disection} $\vec{t}$ of $[a,b]$, i.e., each tuple $\vec{t}=(t_0, \ldots, t_m)$ with $a=t_0< t_1< \ldots <t_m=b$, define the $f$-$\vec{t}${\sl -approximate length} of $\Gamma$ to be
 \[\mathcal{L}^f_{\vec{t}}(\Gamma)
 =\sum_{i=0}^{m-1}|f(t_{i+1})-f(t_i)|.\]
Then the {\sl length} of $\Gamma$ is
 \[\mathcal{L}(\Gamma)=\sup_{\vec{t}}
 \mathcal{L}^f_{\vec{t}}(\Gamma),\]
 where the supremum is taken over all dissections $\vec{t}$ of $[a,b]$. It is easy to show that $\mathcal{L}(\Gamma)$ does not depend on the choice of the one-to-one parametrization $f$, i.e. that the length is an intrinsic property of the pointset $\Gamma$.

 In sections \ref{se:4} and \ref{se:5}  of this paper we use a more general notion of length, namely, the 1-dimensional Hausdorff measure $\mathcal{H}^1(\Gamma)$, which is defined for {\sl every} set $\Gamma\subseteq\R^n$. We refer the reader to \cite{Falc03} or the appendix for the definition of  $\mathcal{H}^1(\Gamma)$. It is well known that  $\mathcal{H}^1(\Gamma)=\mathcal{L}(\Gamma)$
 holds for every simple curve $\Gamma$.

 A curve $\Gamma$ is {\sl rectifiable\/}, or {\sl has finite length} if $\mathcal{L}(\Gamma)<\infty$. In sections \ref{se:4} and \ref{se:5} we use the notation $\mathcal{RC}$ for the set of all rectifiable simple curves.
\begin{definition}
Let $f:[a,b]\rightarrow\R^n$ be continuous.
\begin{enumerate}
\item
For $m\in\Z^+$, $f$ has $m$-{\em fold retracing} if
there exist disjoint, closed subintervals
$I_0,\dots, I_m$ of $[a,b]$ such that the curve
$\Gamma_0=f(I_0)$ has positive length and $f(I_i)=\Gamma_0$
for all $1\leq i\leq m$.
\item
$f$ is {\em non}-{\em retracing} if $f$ does not have $1$-fold retracing.
\item
$f$ has {\em bounded retracing} if there exists
$m\in\Z^+$ such that $f$ does not have $m$-fold
retracing.
\item
$f$ has {\em unbounded retracing} if $f$ does
not have bounded retracing, i.e., if $f$ has
$m$-fold retracing for all $m\in\Z^+$.
\end{enumerate}
\end{definition}
 We now review the notions of computability and complexity of a real-valued function. An {\sl oracle} for a real number $t$ is any function $O_t: \N\to\Q$ with the property that $|O_t(s)-t|\le 2^{-s}$ holds for all $s\in\N$. A function $f: [a,b]\to \R^n$ is {\sl computable} if there is an oracle Turing machine $M$ with the following property. For every $t\in [a,b]$ and every precision parameter $r\in\N$, if $M$ is given $r$ as input and {\sl any} oracle $O_t$ for $t$ as its oracle, then $M$ outputs a rational point $M^{O_t}(r)\in\Q^n$ such that $|M^{O_t}(r)-f(t)|\le 2^{-r}$.
 A function $f: [a,b]\to \R^n$ is {\sl computable in polynomial time} if there is an oracle machine $M$ that does this in time polynomial in $r+l$, where $l$ is the maximum length of the query responses provided by the oracle.

An {\em oracle} for a function $f:[a,b]\rightarrow\R^n$
is any function $\mathcal{O}_f:([a,b]\cap \Q)\times\N\rightarrow\Q^n$
with the property that $|\mathcal{O}_f(q,r)-f(q)|\leq 2^{-r}$
holds for all $q\in[a,b]\cap \Q$ and $r\in\N$. A
decision problem $A$ is {\em Turing reducible } to
a function $f:[a,b]\rightarrow\R^n$, and we write
$A\leq_\mathrm{T} f$, if there is an oracle Turing
machine $M$ such that, for every oracle
$\mathcal{O}_f$ for $f$, $M^{\mathcal{O}_f}$ decides $A$. It is easy
to see that, if  $f$ is computable, then
$A\leq_\mathrm{T} f$ if and only if $A$ is decidable.

 A curve is {\sl computable} if it has a parametrization $f:[a,b]\rightarrow\R^n$, where $a,b\in\Q$ and $f$
is computable.  A curve is {\sl computable in polynomial time}
if it has a parametrization that is computable in polynomial
time.

\end{section} 
\begin{section}{An Efficiently Computable Curve That Must Be Retraced}\label{se:3}
This section presents our main theorem, which is
the existence of a smooth, rectifiable, simple plane
curve ${\bf \Gamma}$ that is parametrizable in polynomial time
but not computably parametrizable in any amount
of time without unbounded retracing. We
begin with a precise construction of the
curve ${\bf \Gamma}$, followed by a brief intuitive
discussion of this construction. The rest of
the section is devoted to proving that ${\bf \Gamma}$
has the desired properties.

\begin{figure}[htbp]
\begin{center}
\begin{tikzpicture}[scale=2]
\draw[->](0,-1.2)--(0,1.5) node[above]{$y$};
\draw[->](0,0)--(5.5,0) node[right]{$x$};
\foreach \x in {-1,...,1}
    {\draw  (-0.02,\x) -- (0.02,\x);
     \draw (0,\x) node[left] {$\x$};}
\foreach \x in {1,...,5}
    {\draw (\x, -0.02) -- (\x, 0.02);
     \draw (\x, 0) node[below] {$\x$};}
\draw (25/6,-0.02) -- (25/6,0.02);
\draw (25/6,-0.3) node {$\frac{25}{6}$};
\draw (55/12,-0.02) -- (55/12,0.02);
\draw (55/12,-0.3) node {$\frac{55}{12}$};
\draw[color=blue,domain=0:25/6] plot (\x,{(25/6/4)*sin(2*pi*(\x)*6/25  r)});
\draw[color=blue,domain=50/12:55/12] plot (\x,{(5/24/4)*(-1)*sin(2*pi*(\x-50/12)*12/5  r)});
\draw[color=blue,domain=55/12:5] plot (\x,{(5/24/4)*sin(2*pi*(\x-55/12)*12/5  r)});
\end{tikzpicture}
\caption{$\psi_{0,5,1}$}
\label{fig:3_1}
\end{center}
\end{figure}


\begin{construction}\label{con:3_1}
\begin{enumerate}[(1)]
\item\label{it:3_1}
For each $a,b\in\R$ with $a<b$, define the functions
$\varphi_{a,b}, \xi_{a,b}:[a,b]\rightarrow\R$ by
\[\varphi_{a,b}(t)=\frac{b-a}{4}\sin \frac{2\pi (t-a)}{b-a}\]
and
\[
\xi_{a,b}(t)=
\begin{cases}
-\varphi_{a,\frac{a+b}{2}}(t)& \text{if } a\leq t\leq \frac{a+b}{2}\\
\varphi_{\frac{a+b}{2},b}(t)&\text{if }\frac{a+b}{2}\leq t\leq b.
\end{cases}
\]
\item\label{it:3_2}
For each $a,b\in\R$ with $a<b$ and each positive
integer $n$, define the function $\psi_{a,b,n}:[a,b]\rightarrow\R$ by
\[
\psi_{a,b,n}(t) = \begin{cases}
\varphi_{a,d_0}(t)&\text{if }a\leq t\leq d_0\\
\xi_{d_{i-1},d_i}(t)&\text{if }d_{i-1}\leq t\leq d_i,
\end{cases}
\]
where
\[d_i=\frac{a+5b}{6}+i\frac{b-a}{ 6n}\]
for $0\leq i\leq n$. (See Figure \ref{fig:3_1}.)
\item\label{it:3_3}
Fix a standard enumeration $M_1, M_2, \dots$
of (deterministic) Turing machines that take
positive integer inputs. For each positive
integer $n$, let $\tau(n)$ denote the number of
steps executed by $M_n$ on input $n$.
It is well known that the {\em diagonal halting problem}
\[K=\myset{n\in\Z^+ }{ \tau(n)<\infty }\]
is undecidable.
\item\label{it:3_4}
Define the horizontal and
vertical acceleration functions $a_x, a_y:[0,1]\rightarrow\R$
as follows. For each $n\in\N$, let
\[t_n = \int_0^n e^{-x}dx = 1- e^{-n},\]
noting that $t_0=0$ and that $t_n$ converges
monotonically to $1$ as $n\rightarrow\infty$.
Also, for each $n\in\Z^+$, let
\[t_n^- =\frac{t_{n-1} + 4t_n}{5},\;\; t_n^+ =\frac{6t_n -t_{n-1}}{5},\]
noting that these are symmetric about $t_n$
and that $t_n^+\leq t_{n+1}^-$.

\begin{enumerate}[(i)]
\item
For $0\leq t\leq 1$, let
\[a_x(t)=
\begin{cases}
-2^{-(n+\tau(n))}\xi_{t_n^-,t_n^+}(t)&\text{if }t_n^-\leq t <t_n^+\\
0&\text{if no such $n$ exists},
\end{cases}
\]
where $2^{-\infty}=0$.
\item
For $0\leq t < 1$, let
\[a_y(t)=\psi_{t_{n-1}, t_n, n}(t),\]
where $n$ is the unique positive integer
such that $t_{n-1}\leq t<t_n$.
\item
Let $a_y(1)=0$.
\end{enumerate}
\item\label{it:3_5}
Define the horizontal and vertical velocity
and position functions $v_x, v_y, s_x, s_y:[0,1]\rightarrow \R$
by
\begin{align*}
v_x(t)=\int_0^t a_x(\theta)d\theta,\;\;\;  &v_y(t)=\int_0^t a_y(\theta) d\theta,\\
s_x(t)=\int_0^t v_x(\theta)d\theta,\;\;\;  & s_y(t)=\int_0^t v_y(\theta)d\theta.
\end{align*}
\item\label{it:3_6}
Define the vector acceleration, velocity,
and position functions $\vec{a},\vec{v},\vec{s}:[0,1]\rightarrow \R^2$
by
\begin{align*}
&\vec{a}(t)=(a_x(t),a_y(t)),\\
&\vec{v}(t)=(v_x(t), v_y(t)),\\
&\vec{s}(t)=(s_x(t),s_y(t)).
\end{align*}
\item\label{it:3_7}
Let ${\bf \Gamma} =\range (\vec{s})$.
\end{enumerate}
\end{construction}

Intuitively, a particle at rest at time
$t=a$ and moving with acceleration given by
the function $\varphi_{a,b}$ moves forward, with velocity
increasing to a maximum at time $t=\frac{a+b}{2}$
and then decreasing back to $0$ at time $t=b$.
The vertical acceleration function $a_y$, together
with the initial conditions $v_y(0)=s_y(0)=0$
implied by (\ref{it:3_5}), thus causes a particle
to move generally upward (i.e., $s_y(t_0)<s_y(t_1)<\cdots$),
coming to momentary rests at times $t_1, t_2, t_3,\dots$.
Between two consecutive such stopping times
$t_{n-1}$ and $t_n$, the particle's vertical
acceleration is controlled by the function
$\psi_{t_{n-1},t_n,n}$. This function causes the
particle's vertical motion to do the following
between times $t_{n-1}$ and $t_n$.
\begin{enumerate}[(i)]
\item\label{it:3_2_1}
From time $t_{n-1}$ to time $\frac{t_{n-1}+5t_n}{6}$,
move upward from elevation $s_y(t_{n-1})$
to elevation $s_y(t_n)$.
\item\label{it:3_2_2}
From time $\frac{t_{n-1}+5t_n}{6}$ to time $t_n$,
make $n$ round trips to a lower elevation
$s\in( s_y(t_{n-1}), s_y(t_n))$.
\end{enumerate}
In the meantime, the horizontal acceleration
function $a_x$, together with the initial conditions
$v_x(0)=s_x(0)=0$ implied by (\ref{it:3_5}), ensure that
the particle remains on or near the
$y$-axis. The deviations from the $y$-axis
are simply described: The particle moves
to the right from time $\frac{t_{n-1}+4t_n}{5}$ through
the completion of the $n$ round trips
described in (\ref{it:3_2_2}) above and then moves
to the $y$-axis between times $t_n$ and $\frac{6t_n-t_{n-1}}{5}$.
The amount of lateral motion here is regulated
by the coefficient $2^{-(n+\tau(n))}$. If $\tau(n)=\infty$,
then there is no lateral motion, and the $n$
round trips in (\ref{it:3_2_2}) are retracings of the
particle's path. If $\tau(n)<\infty$, then these
$n$ round trips are ``forward'' motion along
a curvy part of ${\bf \Gamma}$. In fact, ${\bf \Gamma}$ contains
points of arbitrarily high curvature,
but the particle's motion is kinematically
realistic in the sense that the acceleration
vector $\vec{a}(t)$ is polynomial time computable,
hence continuous and bounded on the
interval $[0,1]$. Figure \ref{fig:3_2} illustrates
the path of the particle from time $t_{n-1}$ to $t_{n+1}$
with $n=1$ and hypothetical (model dependent!)
values $\tau(1)=1$ and $\tau(2)=2$.

\begin{figure}[htbp]
\begin{center}
\begin{tikzpicture}[scale=1.75,y=60]
\foreach \x in {1,2}
    {\draw (\x, 1-0.02) -- (\x, 1+0.02);}
\draw[gray,->](0,0.95)--(0,2.25) node[above]{$y$};
\draw[gray,->](-0.1,1)--(3,1) node[right]{$x$};
\draw[color=black,domain=3:4,->] plot (0,{0.5*25/6/2/pi*\x+0.5*25/6/2/pi*25/6/2/pi*(-1)*sin(2*pi*\x*6/25  r)});
\draw[color=black,domain=4:25/6,->] plot ({60*((1/4)*(0.5/pi)*(\x-4)+(-1/4)*(0.5/pi/2/pi)*sin(2*pi*(\x-4)  r))},{0.5*25/6/2/pi*\x+0.5*25/6/2/pi*25/6/2/pi*(-1)*sin(2*pi*\x*6/25  r)});
\draw[color=black,domain=25/6:55/12,->] plot ({60*((1/4)*(0.5/pi)*(\x-4)+(-1/4)*(0.5/pi/2/pi)*sin(2*pi*(\x-4)  r))},{0.5*25/6/2/pi*25/6-0.25*(\x-50/12)+0.25/2/pi/12*5*sin(2*pi*(\x-50/12)*12/5  r)});
\draw[color=black,domain=55/12:5,->] plot ({60*((1/4)*(0.5/pi)*(\x-4)+(-1/4)*(0.5/pi/2/pi)*sin(2*pi*(\x-4)  r))},{0.5*25/6/2/pi*25/6-0.25*5/12+0.25*(\x-55/12)+0.25*(-1)/2/pi/12*5*sin(2*pi*(\x-55/12)*12/5  r)});
\draw[color=black,domain=5:6,->] plot ({60*((-1/4)*(0.5/pi)*(\x-5)+(1/4)*(0.5/pi/2/pi)*sin(2*pi*(\x-5)  r)+(1/4)*(0.5/pi)*(5-4)+(-1/4)*(0.5/pi/2/pi)*sin(2*pi*(5-4)  r))},{0.5*25/6/2/pi*25/6+0.4*25/6/2/pi*(\x-5)+0.4*(-1)*25/6/2/pi/2/pi/12*25*sin(2*pi*(\x-5)*12/25  r)});
\draw[color=black,domain=6:7,->] plot (0,{0.5*25/6/2/pi*25/6+0.4*25/6/2/pi*(\x-5)+0.4*(-1)*25/6/2/pi/2/pi/12*25*sin(2*pi*(\x-5)*12/25  r)});
\draw[color=black,domain=7:85/12,->] plot ({200*((1/8)*(0.5/pi)*(\x-7)+(-1/8)*(0.5/pi/2/pi*0.5)*sin(2*pi*(\x-7)/0.5  r))},{0.5*25/6/2/pi*25/6+0.4*25/6/2/pi*(\x-5)+0.4*(-1)*25/6/2/pi/2/pi/12*25*sin(2*pi*(\x-5)*12/25  r)});
\draw[color=black,domain=85/12:172.5/24,->] plot ({200*((1/8)*(0.5/pi)*(\x-7)+(-1/8)*(0.5/pi/2/pi*0.5)*sin(2*pi*(\x-7)/0.5  r))},{0.5*25/6/2/pi*25/6+0.4*25/6/2/pi*25/12-0.2*(\x-85/12)+0.2/2/pi/24*2.5*sin(2*pi*(\x-85/12)*24/2.5  r)});
\draw[color=black,domain=172.5/24:175/24,->] plot ({200*((1/8)*(0.5/pi)*(\x-7)+(-1/8)*(0.5/pi/2/pi*0.5)*sin(2*pi*(\x-7)/0.5  r))},{0.5*25/6/2/pi*25/6+0.4*25/6/2/pi*25/12-0.2*2.5/24+0.2*(\x-172.5/24)-0.2/2/pi/24*2.5*sin(2*pi*(\x-172.5/24)*24/2.5  r)});
\draw[color=black,domain=175/24:177.5/24,->] plot ({200*((1/8)*(0.5/pi)*(\x-7)+(-1/8)*(0.5/pi/2/pi*0.5)*sin(2*pi*(\x-7)/0.5  r))},{0.5*25/6/2/pi*25/6+0.4*25/6/2/pi*25/12-0.2*(\x-175/24)+0.2/2/pi/24*2.5*sin(2*pi*(\x-175/24)*24/2.5  r)});
\draw[color=black,domain=177.5/24:180/24] plot ({200*((1/8)*(0.5/pi)*(\x-7)+(-1/8)*(0.5/pi/2/pi*0.5)*sin(2*pi*(\x-7)/0.5  r))},{0.5*25/6/2/pi*25/6+0.4*25/6/2/pi*25/12-0.2*2.5/24+0.2*(\x-177.5/24)-0.2/2/pi/24*2.5*sin(2*pi*(\x-177.5/24)*24/2.5  r)});
\end{tikzpicture}
\caption{Example of $\vec{s}(t)$ from $t_0$ to $t_2$}
\label{fig:3_2}
\end{center}
\end{figure}

The rest of this section is devoted to
proving the following theorem concerning the
curve ${\bf \Gamma}$.

\begin{theorem}(main theorem).\label{th:3_2}
Let $\vec{a}, \vec{v}, \vec{s}$,
and ${\bf \Gamma}$ be as in Construction \ref{con:3_1}.
\begin{enumerate}
\item
The functions $\vec{a}, \vec{v}$, and $\vec{s}$ are Lipschitz
and computable in polynomial time, hence
continuous and bounded.
\item
The total length, including retracings, of
the parametrization $\vec{s}$ of ${\bf \Gamma}$ is finite and
computable in polynomial time.
\item
The curve ${\bf \Gamma}$ is simple, rectifiable,
and smooth except at one endpoint.
\item
Every computable parametrization
$f:[a,b]\rightarrow\R^2$ of ${\bf \Gamma}$ has {\em unbounded
retracing}.
\end{enumerate}
\end{theorem}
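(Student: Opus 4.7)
My plan is to establish the four assertions of the theorem in sequence, with assertions (1)--(3) amounting to technical verifications of the construction's properties and assertion (4) carrying the key diagonalization.

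For (1), the nontrivial step is polynomial-time computability of $\vec{a}$, since it involves the uncomputable coefficient $2^{-(n+\tau(n))}$. The key observation is that approximating this coefficient to precision $2^{-r}$ reduces to simulating $M_n$ on input $n$ for $r$ steps: if $M_n(n)$ halts in that time we learn $\tau(n)$ exactly; otherwise $\tau(n) > r$ and the coefficient is below $2^{-r}$, so outputting $0$ is correct. The interval index $n$ for an input $t$ is recovered quickly from $t_n = 1-e^{-n}$. Lipschitzness of $\vec{a}$ follows because on each $[t_{n-1},t_n]$ the components are sinusoids of amplitude $O(t_n-t_{n-1})$ with bounded derivatives, and these amplitudes decay geometrically. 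Polynomial-time computability of $\vec{v}$ and $\vec{s}$ then follows from standard results on polynomial-time integration of Lipschitz computable functions.

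For (2), the parametric length decomposes as $\int_0^1|\vec{v}|\,dt = \sum_n\int_{t_{n-1}}^{t_n}|\vec{v}|\,dt$. Within each interval $|\vec{v}|=O((t_n-t_{n-1})^2)$ and there are $O(n)$ oscillations, so the $n$-th summand is $O(n e^{-3n})$; the series converges geometrically, and truncating at $n=O(r)$ with explicit tail bounds gives polynomial-time computability. For (3), rectifiability is immediate from (2). Simplicity requires showing both pairwise disjointness of the regions $\Gamma_n = \vec{s}([t_{n-1},t_n])$ (from the strict monotonicity of $s_y$ outside round trips, together with the drifts being much smaller than the vertical gaps between the $s_y(t_n)$) and that each $\Gamma_n$ is individually simple (trivial when $\tau(n)=\infty$, while when $\tau(n)<\infty$ the drift profile places the picket-fence teeth at genuinely distinct $x$-coordinates that do not cross). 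Smoothness at interior points follows from continuity of $\vec{a}$, with a one-sided tangent analysis at the instantaneous rest-times $t_n$; the single non-smooth endpoint is the accumulation point $\lim_{t\to 1^-}\vec{s}(t)$.

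For (4), the main content, suppose for contradiction that $f:[a,b]\to\R^2$ is a computable parametrization of ${\bf\Gamma}$ with $m$-bounded retracing. The plan is to extract from $f$ a decision procedure for the halting set $K$, contradicting its undecidability. The distinguishing feature is that $\mathcal{H}^1(\Gamma_n)$ takes two markedly different values: roughly $s_y(t_n)-s_y(t_{n-1})$ when $\tau(n)=\infty$, versus a value larger by a factor growing linearly in $n$ when $\tau(n)<\infty$ (since $\Gamma_n$ then contains $n$ distinct teeth). The $m$-bounded retracing hypothesis sandwiches the parametric length of $f$ between $\mathcal{H}^1({\bf\Gamma})$ and $m\cdot\mathcal{H}^1({\bf\Gamma})$, giving an effective handle on $\mathcal{H}^1({\bf\Gamma})$ up to a bounded multiplicative factor; combining this with the lower semicomputability of $\mathcal{H}^1({\bf\Gamma})$ coming from the already-computable parametrization $\vec{s}$ pins down $\mathcal{H}^1({\bf\Gamma})$ and, via the disjoint horizontal strips occupied by the $\Gamma_n$, each $\mathcal{H}^1(\Gamma_n)$ separately, thereby deciding whether $\tau(n)<\infty$. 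The main obstacle I anticipate is converting this outline into genuinely effective bit computations: specifically, upgrading the parametric length of $f$ from a merely lower-semicomputable quantity to a computable one under the bounded-retracing hypothesis, and localizing lengths to individual regions $\Gamma_n$ uniformly in $n$ despite the accumulation of regions at $t=1$.
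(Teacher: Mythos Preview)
Your treatment of parts (1)--(3) is essentially sound and close in spirit to the paper's (the simulation argument for $2^{-(n+\tau(n))}$ is exactly the right idea, and your simplicity/smoothness sketches match Corollary~\ref{co:3_7} and Lemma~\ref{lm:3_10}).

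Part (4), however, has a genuine gap. Your length-based strategy rests on the claim that $\mathcal{H}^1(\Gamma_n)$ differs between the two cases ``by a factor growing linearly in $n$''. This is false for the construction at hand. From Lemma~\ref{lm:3_6}, the tooth depth is
\[
s_y(t_n)-s_y(e_0^{(n)})=\frac{(e-1)^3}{12^3 n^3\cdot 8\pi e^{3n}},
\]
while the main vertical rise is
\[
s_y(t_n)-s_y(t_{n-1})=\frac{5^3(e-1)^3}{6^3\cdot 8\pi e^{3n}},
\]
so the $n$ teeth together add only a $1/(500n^2)$ fraction of the base length. Thus the two candidate values of $\mathcal{H}^1(\Gamma_n)$ differ by a factor $1+O(1/n^2)$, not by a factor linear in $n$. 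Even if one could compute the parametric length of $f$ exactly, the bounded-retracing sandwich $\mathcal{H}^1(\Gamma_n)\le \mathrm{paramlen}\le m\cdot \mathcal{H}^1(\Gamma_n)$ is far too coarse to separate these values once $n$ is moderately large; and in any case parametric length of a computable $f$ is only lower semicomputable (as you note), which yields at best a one-sided test, not a decision procedure for $K$.

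The paper's argument (Lemma~\ref{lm:3_12}) sidesteps lengths altogether. It samples $f_y$ on a grid fine enough (via a computable modulus of continuity) to detect every crossing of the two heights $s_y(t_n)$ and $s_y(e_0^{(n)})$, and simply \emph{counts alternations}. If $n\in K$, the $n$ teeth sit at pairwise distinct $x$-coordinates, so any continuous surjection onto ${\bf\Gamma}$ must, by connectedness of images of intervals in a simple arc, alternate between the two heights at least $n>m$ times. If $n\notin K$, the relevant portion of ${\bf\Gamma}$ is a single vertical segment, and $m$-bounded retracing caps the alternations at $m$. This combinatorial crossing count is robust to the tiny tooth size and requires no length estimates; it is the step your proposal is missing.
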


For the remainder of this section, we
use the notation of Construction \ref{con:3_1}.

The following two observations facilitate our analysis
of the curve ${\bf \Gamma}$. The proofs are routine calculations.

\begin{observation}\label{ob:3_3}
For all $n\in\Z^+$, if we write
\[d_i^{(n)} =\frac{t_{n-1} + 5t_n}{6}  + i\frac{t_n-t_{n-1}}{6n}\]
and
\[e_i^{(n)}=d_i^{(n)} +\frac{t_n-t_{n-1}}{12n}\]
for all $0\leq i<n$, then
\[t_{n-1} <t_n^-<d_0^{(n)} <e_0^{(n)}<d_1^{(n)}<e_1^{(n)}
<\cdots < d_{n-1}^{(n)} < e_{n-1}^{(n)} <t_n<t_n^+<t_{n+1}^-.\]
\end{observation}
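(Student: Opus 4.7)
The plan is to reduce every inequality to a statement about a single scale $\Delta_n := t_n - t_{n-1}$, and then verify the resulting rational inequalities. Define $u = (t - t_{n-1})/\Delta_n$ so that each of the points $t_n^-, d_0^{(n)}, \dots, d_{n-1}^{(n)}, e_{n-1}^{(n)}, t_n$ corresponds to a fixed rational value of $u$ (independent of which $n$ we are looking at within this block). Concretely, unwinding the definitions gives
\[
u(t_{n-1}) = 0,\quad u(t_n^-) = \tfrac{4}{5},\quad u(d_i^{(n)}) = \tfrac{5}{6} + \tfrac{i}{6n},\quad u(e_i^{(n)}) = \tfrac{5}{6} + \tfrac{i}{6n} + \tfrac{1}{12n},\quad u(t_n) = 1.
\]
All the inequalities inside the block $[t_{n-1}, t_n]$ then collapse to $0 < \tfrac{4}{5} < \tfrac{5}{6} < \tfrac{5}{6}+\tfrac{1}{12n} < \tfrac{5}{6}+\tfrac{1}{6n} < \cdots < \tfrac{5}{6} + \tfrac{n-1}{6n} + \tfrac{1}{12n} < 1$, each of which is a trivial arithmetic check. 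In particular $u(e_{n-1}^{(n)}) = \tfrac{12n - 1}{12n} < 1$, giving $e_{n-1}^{(n)} < t_n$.

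Next I would handle the two inequalities that cross the endpoint $t_n$. For $t_n < t_n^+$, note $u(t_n^+) = \tfrac{6}{5} > 1$, which is immediate. The only genuinely non-trivial comparison is $t_n^+ < t_{n+1}^-$, since it involves both $\Delta_n$ and $\Delta_{n+1}$. Using $t_n = 1 - e^{-n}$, one computes $\Delta_n = e^{-n}(e-1)$ and hence $\Delta_n / \Delta_{n+1} = e$. Writing
\[
t_{n+1}^- - t_n^+ = \tfrac{4}{5}\Delta_{n+1} - \tfrac{1}{5}\Delta_n = \tfrac{1}{5}\Delta_{n+1}(4 - e),
\]
the inequality follows from $e < 4$. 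This is the step I would flag as the only one requiring a little care, because it is the single place where the specific choice $t_n = 1 - e^{-n}$ enters — and in fact any other base $c$ with $1 < c < 4$ for $t_n = 1 - c^{-n}$ would serve equally well.

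Chaining the block inequality with the cross-block inequality yields the full chain stated in the observation. I would present the proof as two short computations (inside the block, and across the block), with the reduction to the dimensionless parameter $u$ made explicit so that the only content left is the elementary estimate $e < 4$.
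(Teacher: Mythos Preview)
Your proof is correct; the paper itself does not give a proof of this observation, merely remarking that ``the proofs are routine calculations,'' so your explicit verification via the dimensionless parameter $u$ and the estimate $e<4$ for the cross-block inequality $t_n^+<t_{n+1}^-$ is exactly the kind of routine computation the authors had in mind.
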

\begin{observation}\label{ob:3_4}
For all $a,b\in\R$ with $a<b$,
\[\int_a^b\int_a^t \varphi_{a,b}(\theta)d\theta dt =\frac{(b-a)^3}{8\pi}.\]
\end{observation}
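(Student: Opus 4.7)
The plan is to establish the identity by direct evaluation of the iterated integral in two steps, making use of the fact that the cosine integrates to zero over a full period. There is no real obstacle here; the content of the observation is simply that the normalization of $\varphi_{a,b}$ is chosen so that the double antiderivative, evaluated over $[a,b]$, produces the clean closed form $(b-a)^3/(8\pi)$.

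First, I would compute the inner integral. With the substitution $u = \theta - a$, one immediately finds
\[
\int_a^t \varphi_{a,b}(\theta)\,d\theta
= \frac{b-a}{4}\int_0^{t-a}\sin\frac{2\pi u}{b-a}\,du
= \frac{(b-a)^2}{8\pi}\left[1-\cos\frac{2\pi(t-a)}{b-a}\right],
\]
since $-\frac{b-a}{2\pi}\cos\frac{2\pi u}{b-a}$ is an antiderivative of $\sin\frac{2\pi u}{b-a}$. Note that the pre-factor $(b-a)/4$ in the definition of $\varphi_{a,b}$ combines with the $(b-a)/(2\pi)$ coming from the antiderivative to produce the constant $(b-a)^2/(8\pi)$.

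Next, I would integrate this expression from $a$ to $b$. The constant term contributes
\[
\frac{(b-a)^2}{8\pi}\int_a^b 1\,dt = \frac{(b-a)^3}{8\pi}.
\]
For the cosine term, the substitution $u=t-a$ reduces it to $\int_0^{b-a}\cos\frac{2\pi u}{b-a}\,du$, which vanishes because the integrand traverses exactly one full period on $[0, b-a]$. Combining these contributions yields $\int_a^b\int_a^t \varphi_{a,b}(\theta)\,d\theta\,dt = (b-a)^3/(8\pi)$, as claimed. The only step of any substance is the observation that the outer cosine integral spans a complete period and therefore vanishes; everything else is bookkeeping of constants.
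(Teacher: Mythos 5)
Your calculation is correct and is exactly the routine direct evaluation the paper has in mind (the paper omits the proof, noting only that it is a "routine calculation"): the inner integral gives $\frac{(b-a)^2}{8\pi}\bigl(1-\cos\frac{2\pi(t-a)}{b-a}\bigr)$, and the cosine term vanishes over the full period, leaving $\frac{(b-a)^3}{8\pi}$. Nothing further is needed.
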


We now proceed with a quantitative analysis
of the geometry of ${\bf \Gamma}$. We begin with the
horizontal component of $\vec{s}$.

\begin{lemma}\label{lm:3_5}
\begin{enumerate}
\item
For all $t\in[0,1] -\bigcup_{n\in K} (t_n^-,t_n^+)$,
$v_x(t)=s_x(t)=0$.
\item
For all $n\in K$ and $t\in(t_n^-, t_n)$ , $v_x(t)>0$.
\item
For all $n\in K$ and $t\in(t_n, t_n^+)$, $v_x(t)<0$.
\item
For all $n\in\Z^+$, $s_x(t_n)=\frac{(e-1)^3}{1000\pi e^{3n}}  2^{-(n+\tau(n))}$.
\item
$s_x(1)=0$.
\end{enumerate}
\end{lemma}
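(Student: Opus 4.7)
The plan is to unfold the piecewise definitions and reduce every claim to elementary integrals of the building block $\varphi_{a,b}$. The first observation will be that the coefficient $2^{-(n+\tau(n))}$ vanishes for $n\notin K$, so $a_x$ is identically zero off the union $\bigcup_{n\in K}(t_n^-,t_n^+)$. On an active interval with $n\in K$, unfolding $\xi_{t_n^-,t_n^+}$ and canceling the leading minus in the definition of $a_x$ will give $a_x(t)=2^{-(n+\tau(n))}\varphi_{t_n^-,t_n}(t)$ for $t\in[t_n^-,t_n]$ and $a_x(t)=-2^{-(n+\tau(n))}\varphi_{t_n,t_n^+}(t)$ for $t\in[t_n,t_n^+]$. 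A direct antiderivative computation gives
\[\int_a^t \varphi_{a,b}(\theta)\,d\theta=\frac{(b-a)^2}{8\pi}\Bigl(1-\cos\tfrac{2\pi(t-a)}{b-a}\Bigr),\]
which is nonnegative on $[a,b]$, vanishes exactly at the endpoints, and is strictly positive in between.

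I would then proceed by induction on $n$ with inductive hypothesis $v_x(t_n^-)=s_x(t_n^-)=0$ (the base case is trivial since $a_x\equiv 0$ on $[0,t_1^-]$). The antiderivative above immediately yields parts 2 and 3: $v_x$ is strictly positive on $(t_n^-,t_n)$, returns to zero at $t_n$, is strictly negative on $(t_n,t_n^+)$, and returns to zero at $t_n^+$. Between active intervals $a_x\equiv 0$, so $v_x$ remains zero there; combined with $v_x(t_n^\pm)=0$ this closes the induction for $v_x$ and establishes its half of part 1. For the $s_x$ half of part 1, $s_x$ is constant on each inactive segment, so it suffices to show that no net horizontal displacement is accumulated across an active interval. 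Applying Observation \ref{ob:3_4} to each of the two half-intervals $[t_n^-,t_n]$ and $[t_n,t_n^+]$ gives contributions $+2^{-(n+\tau(n))}(t_n-t_n^-)^3/(8\pi)$ and $-2^{-(n+\tau(n))}(t_n^+-t_n)^3/(8\pi)$ to $\int_{t_n^-}^{t_n^+}v_x$, and these cancel because $t_n^+-t_n=t_n-t_n^-$ by construction of $t_n^\pm$.

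For part 4, $s_x(t_n)=s_x(t_n^-)+\int_{t_n^-}^{t_n}v_x$ reduces to the single positive contribution above, and substituting $t_n-t_n^-=(t_n-t_{n-1})/5=e^{-n}(e-1)/5$ yields the stated constant. Part 5 follows from continuity of $s_x$ at $t=1$ together with $s_x(t_n^+)=0$ for every $n$ and $t_n^+\to 1$. There is no conceptual obstacle; the point requiring the most care will be the propagation of the sign change built into $\xi_{a,b}$ through the leading minus in the definition of $a_x$, so that on the first half of an active interval the acceleration moves the particle rightward and on the second half it moves back to the $y$-axis, rather than the reverse.
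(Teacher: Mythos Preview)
Your proposal is correct and follows essentially the same approach as the paper: the paper dispatches parts 1--3 with ``routine by inspection and induction,'' computes part 4 exactly as you do via Observation~\ref{ob:3_4} and the substitution $t_n-t_n^-=\tfrac{1}{5}(e-1)e^{-n}$, and derives part 5 from $s_x(t_n)\to 0$ together with parts 1--3. Your treatment is more explicit than the paper's (you actually write out the antiderivative of $\varphi_{a,b}$ and the cancellation of the two half-interval contributions), and your argument for part 5 via $s_x(t_n^+)=0$ and continuity is a minor variant of the paper's, but there is no substantive difference in method.
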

\begin{proof}
Parts 1-3 are routine by inspection and
induction. For $n\in\Z^+$, Observation \ref{ob:3_4} tells us
that
\begin{align*}
s_x(t_n)&=\frac{(t_n - t_n^-)^3}{8\pi}2^{-(n+\tau(n))}\\
&=\frac{(\tfrac{1}{5}(t_n-t_{n-1}))^3 }{8\pi}2^{-(n+\tau(n))}\\
&=\frac{(\tfrac{1}{5}( (e-1)e^{-n} ))^3 }{8\pi}2^{-(n+\tau(n))}\\
&=\frac{(e-1)^3}{1000 \pi e^{3n}}2^{-(n+\tau(n))}
\end{align*}
so 4 holds. This implies that $s_x(t_n)\rightarrow 0$
as $n\rightarrow\infty$, whence 5 follows from 1,2, and 3.
\end{proof}

The following lemma analyzes the vertical component
of $\vec{s}$. We use
the notation of Observation \ref{ob:3_3},
with the additional proviso that $d_n^{(n)}=t_n$.

\begin{lemma}\label{lm:3_6}
\begin{enumerate}
\item
For all $n\in\Z^+$ and $t\in(t_{n-1}, d_0^{(n)})$, $v_y(t)>0$.
\item
For all $n\in\Z^+$, $0\leq i <n$, and $t\in(d_i^{(n)}, e_i^{(n)})$, $v_y(t)<0$.
\item
For all $n\in\Z^+$, $0\leq i<n$, and $t\in (e_i^{(n)},d_{i+1}^{(n)})$, $v_y(t)>0$.
\item
For all $n\in\Z^+$, $0\leq i<n$, and $t\in \{ e_i^{(n)}, d_i^{(n)},t_n\}$, $v_y(t) = 0$.

\item
For all $n\in\Z^+$ and $0\leq i \leq n$, $s_y(d_i^{(n)})= s_y(d_0^{(n)})$.
\item
For all $n\in\Z^+$ and $0\leq i < n$, $s_y(e_i^{(n)}) = s_y(e_0^{(n)})$.
\item
For all $n\in\N$, $s_y(t_n)=\frac{5^3 (e-1)^3}{6^3\cdot 8\pi  } \sum_{i=1}^n \frac{1}{e^{3i}}$.
\item
For all $n\in\Z^+$, $s_y(e_0^{(n)}) = s_y(t_n) -\frac{(e-1)^3}{12^3n^38\pi e^{3n}}$.
\item
$s_y(1) = \frac{5^3 (e-1)^3}{6^3\cdot 8\pi  (e^3-1)}$.
\end{enumerate}
\end{lemma}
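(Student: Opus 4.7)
The plan is to analyze $v_y$ and $s_y$ piecewise on each interval $[t_{n-1},t_n]$, using induction on $n$ to carry forward the ``particle at rest'' initial condition $v_y(t_{n-1})=0$. On this interval $a_y=\psi_{t_{n-1},t_n,n}$, which by Construction~\ref{con:3_1}(\ref{it:3_2}) decomposes as $\varphi_{t_{n-1},d_0^{(n)}}$ on $[t_{n-1},d_0^{(n)}]$ and as $\xi_{d_{i-1}^{(n)},d_i^{(n)}}$ on each $[d_{i-1}^{(n)},d_i^{(n)}]$ for $1\le i\le n$. The basic fact I will use repeatedly is that, starting from rest, integration of $\varphi_{a,b}$ yields
\[
\int_a^t\varphi_{a,b}(\theta)\,d\theta=\frac{(b-a)^2}{8\pi}\Bigl(1-\cos\tfrac{2\pi(t-a)}{b-a}\Bigr),
\]
which is strictly positive on the open interval $(a,b)$ and vanishes exactly at its endpoints.

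For parts~1--4 I would induct on $n$, with base case $v_y(0)=0$ from Construction~\ref{con:3_1}(\ref{it:3_5}). On $[t_{n-1},d_0^{(n)}]$ the identity above with $(a,b)=(t_{n-1},d_0^{(n)})$ gives $v_y>0$ on the open interval and $v_y=0$ at both endpoints, establishing part~1. On each $[d_i^{(n)},d_{i+1}^{(n)}]$, the inductive hypothesis $v_y(d_i^{(n)})=0$ and the fact that $\xi_{d_i^{(n)},d_{i+1}^{(n)}}$ equals $-\varphi_{d_i^{(n)},e_i^{(n)}}$ on the first half and $\varphi_{e_i^{(n)},d_{i+1}^{(n)}}$ on the second half give, via the same formula, $v_y<0$ on $(d_i^{(n)},e_i^{(n)})$, $v_y(e_i^{(n)})=0$, $v_y>0$ on $(e_i^{(n)},d_{i+1}^{(n)})$, and $v_y(d_{i+1}^{(n)})=0$. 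This yields parts~2--4 and in particular propagates the rest condition, since $v_y(t_n)=v_y(d_n^{(n)})=0$.

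For parts~5, 6, and~8, observe that the restriction of $v_y$ to $[d_i^{(n)},d_{i+1}^{(n)}]$ is antisymmetric about $e_i^{(n)}$, so $\int_{d_i^{(n)}}^{d_{i+1}^{(n)}}v_y=0$ and the negative-half displacement $s_y(e_i^{(n)})-s_y(d_i^{(n)})$ is the same for every $i$. This immediately gives parts~5 and~6, and in particular $s_y(t_n)=s_y(d_n^{(n)})=s_y(d_0^{(n)})$. Observation~\ref{ob:3_4} applied to the negative half, whose length is $(t_n-t_{n-1})/(12n)=(e-1)/(12n\,e^n)$, then yields
\[
s_y(e_0^{(n)})-s_y(t_n)=s_y(e_0^{(n)})-s_y(d_0^{(n)})=-\frac{(e-1)^3}{12^3 n^3\cdot 8\pi\, e^{3n}},
\]
which is part~8.

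Finally, since every $\xi$-block contributes zero net displacement, the entire elevation gain $s_y(t_n)-s_y(t_{n-1})$ comes from the initial $\varphi$-bump on $[t_{n-1},d_0^{(n)}]$ of length $5(t_n-t_{n-1})/6$, and Observation~\ref{ob:3_4} gives $s_y(t_n)-s_y(t_{n-1})=5^3(e-1)^3/(6^3\cdot 8\pi\, e^{3n})$. Telescoping yields part~7, and letting $n\to\infty$ (using continuity of $s_y$ at~$1$ and the geometric series $\sum_{k=1}^\infty e^{-3k}=1/(e^3-1)$) yields part~9. The main care required is keeping the inductive bookkeeping on $v_y$ clean so that Observation~\ref{ob:3_4} applies in its starting-at-rest form on each subinterval; the rest is direct computation via the single antiderivative identity above.
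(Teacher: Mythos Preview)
Your proposal is correct and follows essentially the same approach as the paper's proof: parts~1--6 by induction and inspection of the sign of the integrated $\varphi$-bumps, then Observation~\ref{ob:3_4} for the quantitative displacements in parts~7--8, and summation/continuity for part~9. Your write-up is in fact more detailed than the paper's, which dispatches parts~1--6 with ``clear by inspection and induction'' and then carries out the same Observation~\ref{ob:3_4} computations you describe.
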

\begin{proof}
Parts 1-6 are clear by inspection
and induction. By 4. and Observation \ref{ob:3_4},
\begin{align*}
s_y(t_n) -s_y(t_{n-1})
&=s_y(d_0^{(n)}) - s_y(t_{n-1})\\
&=\frac{[\frac{5}{6}(t_n - t_{n-1}) ]^3}{8\pi} =  \frac{[\frac{5}{6}( (e-1)e^{-n}) ]^3}{8\pi}\\
&=\frac{5^3 (e-1)^3}{6^3\cdot 8\pi e^{3n} }
\end{align*}
for all $n\in\Z^+$, so 6 holds by induction. Also
by 4 and Observation \ref{ob:3_4},
\begin{align*}
s_y(t_n) - s_y(e_0^{(n)})
&=s_y(d_0^{(n)}) - s_y(e_0^{(n)})\\
&=\frac{ [\frac{1}{12n}(t_n-t_{n-1}) ]^3 }{8\pi}=\frac{ [\frac{1}{12n}((e-1)e^{-n}) ]^3 }{8\pi}\\
&=\frac{(e-1)^3}{12^3 n^3 8\pi e^{3n}},
\end{align*}
so 7 holds. Finally, by 6,
\[s_y(1) = \frac{5^3(e-1)^3}{6^3 8 \pi (e^3-1)} ,\]
i.e., 8 holds.
\end{proof}

By Lemmas \ref{lm:3_5} and \ref{lm:3_6}, we see that
$\vec{s}$ parametrizes a curve from $\vec{s}(0)=(0,0)$
to $\vec{s}(1) =  (0, \frac{5^3(e-1)^3}{6^3 8\pi (e^3-1)})$.

The proofs of Lemmas \ref{lm:3_5} and \ref{lm:3_6} are included in the appendix.

It is clear from Observation \ref{ob:3_3} and
Lemmas \ref{lm:3_5} and \ref{lm:3_6} that the curve $\Gamma$ does not intersect itself.
We thus have the following.
\begin{corollary}\label{co:3_7}
${\bf \Gamma}$ is a simple curve
from $\vec{s}(0)=(0,0)$ to $\vec{s}(1) =(0, \frac{5^3(e-1)^3}{6^3 8 \pi (e^3-1)})$.
\end{corollary}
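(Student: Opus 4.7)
The corollary has two parts: the endpoint values and simplicity. The endpoint values are immediate: $\vec{s}(0)=(0,0)$ because $s_x$ and $s_y$ are defined as integrals from $0$ in Construction~\ref{con:3_1}(\ref{it:3_5}), while $\vec{s}(1)=(0,\frac{5^3(e-1)^3}{6^3\cdot 8\pi(e^3-1)})$ combines Lemma~\ref{lm:3_5}(5) with Lemma~\ref{lm:3_6}(9).

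For simplicity, the plan is to exhibit a one-to-one continuous parametrization of $\mathbf{\Gamma}$ by reparametrizing $\vec{s}$ to collapse the retracings. I would decompose $[0,1]$ along the stopping times $t_n$. Lemma~\ref{lm:3_6}(1)--(8) shows that the range of $s_y$ on each section $[t_{n-1},t_n]$ is exactly $[s_y(t_{n-1}),s_y(t_n)]$: the monotonicity parts (1)--(6) describe the initial rise and the oscillations, and parts (7)--(8) yield $s_y(e_0^{(n)})>s_y(t_{n-1})$ via the comparison $\frac{5^3}{6^3}>\frac{1}{12^3 n^3}$, so the oscillation troughs stay above the previous stopping height. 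Since $(s_y(t_n))$ is strictly increasing, the images $\vec{s}([t_{n-1},t_n])$ for distinct $n$ meet only at the shared stopping points $\vec{s}(t_n)$ of consecutive sections.

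Within a section $[t_{n-1},t_n]$, I would argue injectivity in two cases. If $n\notin K$, Lemma~\ref{lm:3_5}(1) forces $s_x\equiv 0$, and the image is the vertical segment $\{0\}\times[s_y(t_{n-1}),s_y(t_n)]$. If $n\in K$, then $\vec{s}$ itself is injective on $[t_{n-1},t_n]$: on $[t_{n-1},t_n^-]$, $s_x\equiv 0$ with $s_y$ strictly increasing (Lemma~\ref{lm:3_6}(1)); on $[t_n^-,t_n]$, $s_x$ is strictly increasing (Lemma~\ref{lm:3_5}(2)); and the two pieces meet only at $(0,s_y(t_n^-))$. The return interval $[t_n,t_n^+]$ (for $n\in K$, lying in section $n+1$) is parametrized injectively in the same manner: $s_x$ strictly decreasing (Lemma~\ref{lm:3_5}(3)) and $s_y$ strictly increasing on $(t_n,t_n^+)\subset(t_n,d_0^{(n+1)})$ (Lemma~\ref{lm:3_6}(1) and Observation~\ref{ob:3_3}).

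The main obstacle is ruling out three-way branching at the return points $(0,s_y(t_n^+))$ for $n\in K$: if the next section's oscillations on the $y$-axis (when $n+1\notin K$) were to dip below height $s_y(t_n^+)$, then $\mathbf{\Gamma}$ would have three arcs meeting at that point---the excursion arriving from the right, plus the $y$-axis extending both up and down. The needed estimate is $s_y(t_n^+)<s_y(e_0^{(n+1)})$, which I would verify by integrating $a_y=\varphi_{t_n,d_0^{(n+1)}}$ on $[t_n,t_n^+]$ and comparing the resulting height gain (proportional to $(d_0^{(n+1)}-t_n)^3=\frac{125(e-1)^3}{216\,e^{3(n+1)}}$) against $\frac{(e-1)^3}{12^3(n+1)^3\cdot 8\pi\,e^{3(n+1)}}$ from Lemma~\ref{lm:3_6}(8); the inequality reduces to a scale-free trigonometric bound involving $\frac{12\pi e}{25}$ that holds for every $n$ with ample margin. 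Once this is in hand, $\mathbf{\Gamma}$ is a concatenation of arcs ($y$-axis segments alternating with the excursions indexed by $K$) glued end-to-end without branching, hence admits a one-to-one continuous parametrization.
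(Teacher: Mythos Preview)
Your argument is correct and considerably more detailed than the paper's. The paper's proof is essentially a one-liner: it writes down an explicit modified parametrization
\[
\vec{s}\,'(t)=
\begin{cases}
\dfrac{t-t_n^-}{\,t_n^+-t_n^-\,}\,\vec{s}(t_n^+)+\dfrac{t_n^+-t}{\,t_n^+-t_n^-\,}\,\vec{s}(t_n^-) & \text{if }t\in(t_n^-,t_n^+)\text{ with }n\notin K,\\[4pt]
\vec{s}(t) & \text{otherwise},
\end{cases}
\]
and simply asserts that $\vec{s}\,'$ is one-to-one with $\range(\vec{s}\,')=\mathbf{\Gamma}$, deferring the non-self-intersection to the remark ``It is clear from Observation~\ref{ob:3_3} and Lemmas~\ref{lm:3_5} and~\ref{lm:3_6}\ldots'' placed before the corollary.

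Your section-by-section analysis and the paper's explicit $\vec{s}\,'$ are two packagings of the same idea---collapse the retracings for $n\notin K$ and keep $\vec{s}$ on the excursions---but you go further by isolating the one genuinely non-obvious geometric check: that $s_y(e_0^{(n+1)})>s_y(t_n^+)$, which is what prevents a triple point at $(0,s_y(t_n^+))$ when $n\in K$. The paper never states this inequality; your reduction to a scale-free bound in $2\pi\alpha=\tfrac{12\pi e}{25}$ is the right way to settle it (what must actually be compared is the \emph{remaining} rise $s_y(t_{n+1})-s_y(t_n^+)=\bigl(1-\alpha+\tfrac{1}{2\pi}\sin 2\pi\alpha\bigr)\tfrac{(d_0^{(n+1)}-t_n)^3}{8\pi}$ against the dip $s_y(t_{n+1})-s_y(e_0^{(n+1)})$, but the calculation you sketch gives this immediately).

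One small bookkeeping wrinkle: your decomposition by sections $[t_{n-1},t_n]$ does not quite line up with the excursions, since the return leg $[t_{n-1},t_{n-1}^+]$ of the $(n-1)$st excursion (when $n-1\in K$) lies inside section $n$; your claim ``$s_x\equiv 0$ on $[t_{n-1},t_n]$ when $n\notin K$'' is therefore only true from $t_{n-1}^+$ onward. You already handle this piece under ``the return interval,'' so the fix is just to decompose along the points $t_n^{\pm}$ rather than $t_n$---which is exactly what the paper's $\vec{s}\,'$ does implicitly.
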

\begin{proof}
Let $\vec{s}':[0,1]\rightarrow \R^2$ be such that
\[\vec{s}'(t)=\begin{cases}
\vec{s}(t_n^+)\frac{t-t_n^-}{t_n^+-t_n^-}+\vec{s}(t_n^-)\frac{t_n^+-t}{t_n^+-t_n^-}&t\in (t_n^-,t_n^+), n\notin K,\\
\vec{s}(t)&\text{otherwise}.
\end{cases}\]
Note that by construction of $\vec{s}$, retracing happens along $y$-axis between
$(0,\vec{s}(t_n^-))$ and $(0, \vec{s}(t_n^+))$ only when $t\in (t_n^-,t_n^+)$ for $n\notin K$.
In $\vec{s}'$, for all $n\notin K$, $\vec{s}'$ maps $(t_n^-,t_n^+)$ to the vertical line segment
between $(0,\vec{s}(t_n^-))$ and $(0, \vec{s}(t_n^+))$ linearly. Otherwise, $\vec{s}'(t) = \vec{s}(t)$.
Hence, $\vec{s}'(0) = (0,0)$, $\vec{s}'(1) = (0,\frac{5^3(e-1)^3}{6^3 8 \pi (e^3-1)})$,
and $\vec{s}'$ is a one-to-one parametrization of $\Gamma=\range(\vec s)$, although $\vec{s}'$ is not
computable. Therefore $\Gamma$ is a simple curve.
\end{proof}
\begin{lemma}\label{lm:3_8}
The functions $\vec{a}, \vec{v}$, and $\vec{s}$ are
Lipschitz, hence continuous, on $[0,1]$.
\end{lemma}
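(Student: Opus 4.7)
The plan is to work from the inside out: first show that the acceleration $\vec{a}$ is globally Lipschitz on $[0,1]$, which is the substantive step, and then deduce Lipschitzness of $\vec{v}$ and $\vec{s}$ directly from their definitions as iterated integrals. I would begin with a direct derivative computation on the atomic piece: $\varphi_{a,b}'(t)=\tfrac{\pi}{2}\cos\tfrac{2\pi(t-a)}{b-a}$, so $\varphi_{a,b}$ is $(\pi/2)$-Lipschitz with a constant independent of $a<b$. Since $\xi_{a,b}$ glues $-\varphi_{a,(a+b)/2}$ and $\varphi_{(a+b)/2,b}$ at the midpoint, where both summands vanish, $\xi_{a,b}$ is continuous and inherits the same $(\pi/2)$-Lipschitz constant. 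The same gluing argument, applied to $\varphi_{a,d_0}$ followed by the $\xi_{d_{i-1},d_i}$'s, shows that $\psi_{a,b,n}$ is continuous and $(\pi/2)$-Lipschitz for every $a<b$ and every $n\in\Z^+$, and crucially that $\psi_{a,b,n}(a)=\psi_{a,b,n}(b)=0$.

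Next I would promote these local constants to global ones for $a_x$ and $a_y$ on $[0,1]$. For $a_y$: on $[t_{n-1},t_n]$ we have $a_y=\psi_{t_{n-1},t_n,n}$, which vanishes at both $t_{n-1}$ and $t_n$, so $a_y$ is continuous at every $t_n$, and the trivial bound $|a_y(t)|\leq\tfrac{5(e-1)e^{-n}}{24}$ on $[t_{n-1},t_n)$ forces continuity at $1$ as well. For $s<t$ in different pieces $[t_{n-1},t_n]$ and $[t_{m-1},t_m]$ (with $t=1$ handled as the limit $m\to\infty$), the telescoping
\[|a_y(s)-a_y(t)|\leq|a_y(s)-a_y(t_n)|+|a_y(t_{m-1})-a_y(t)|\leq\tfrac{\pi}{2}(t_n-s)+\tfrac{\pi}{2}(t-t_{m-1})\leq\tfrac{\pi}{2}(t-s),\]
where $t_n\leq t_{m-1}$ is used in the last step, converts the uniform piecewise bound into a global one. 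The argument for $a_x$ is identical in structure: outside $\bigcup_{n\in K}(t_n^-,t_n^+)$ we have $a_x\equiv 0$, on each such interval $a_x=-2^{-(n+\tau(n))}\xi_{t_n^-,t_n^+}$ is $(\pi/2)$-Lipschitz (since $2^{-(n+\tau(n))}\leq 1$) and vanishes at both endpoints, and telescoping through zeros again yields a global Lipschitz bound. Thus $\vec{a}$ is Lipschitz, and in particular bounded, on $[0,1]$.

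Finally, setting $M=\sup_{t\in[0,1]}|\vec{a}(t)|<\infty$, the estimate $|\vec{v}(t)-\vec{v}(s)|=\bigl|\int_s^t\vec{a}(\theta)\,d\theta\bigr|\leq M|t-s|$ shows $\vec{v}$ is Lipschitz and in particular bounded, and the identical argument with $\sup|\vec{v}|$ in place of $M$ shows $\vec{s}$ is Lipschitz. The main obstacle is the telescoping step for $a_y$: its pieces are infinite in number and accumulate at $t=1$, so one cannot simply invoke a ``piecewise-Lipschitz implies Lipschitz'' slogan. What makes it go through is the key observation that $\psi_{a,b,n}$ vanishes at \emph{both} endpoints, which forces $a_y$ to pass through zero at every transition $t_n$ and thereby lets the distance across arbitrarily many pieces be controlled by the distances inside only the first and last piece.
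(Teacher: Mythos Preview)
Your proposal is correct and follows essentially the same route as the paper: differentiate $\varphi_{a,b}$ to get the uniform $\pi/2$ Lipschitz bound, propagate it through the glued pieces to $a_x$ and $a_y$, and then pass to $\vec v$ and $\vec s$ by integration. The paper's proof is far terser---it records $\mathrm{Lip}(\varphi_{a,b})=\pi/2$, asserts $\mathrm{Lip}(a_x)\le\pi/4$ and $\mathrm{Lip}(a_y)=\pi/2$ ``by inspection,'' combines them to $\mathrm{Lip}(\vec a)\le\pi\sqrt5/4$, and concludes---so your telescoping-through-zeros argument is exactly the content hidden behind that ``by inspection.'' (Your $a_x$ bound of $\pi/2$ is slightly looser than the paper's $\pi/4$, which uses $2^{-(n+\tau(n))}\le 1/2$, but this is immaterial.)
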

\begin{proof}
It is clear by differentiation that
$Lip(\varphi_{a,b})=\frac{\pi}{2}$ for all $a,b\in\R$ with $a<b$.
It follows by inspection that $Lip(a_x)\leq\frac{\pi}{4}$
and $Lip(a_y)=\frac{\pi}{2}$, whence
\[Lip(\vec{a})\leq \sqrt{Lip(a_x)^2 +Lip(a_y)^2}\leq \frac{\pi\sqrt{5}}{4}.\]
Thus $\vec{a}$ is Lipschitz, hence continuous (and
bounded), on $[0, 1]$. It follows immediately
that $\vec{v}$ and $\vec{s}$ are Lipschitz, hence continuous,
on $[0,1]$.
\end{proof}

Since every Lipschitz parametrization has
finite total length \cite{Apos76a}, and since the
length of a curve cannot exceed the total
length of any of its parametrizations, we
immediately have the following.
\begin{corollary}\label{co:3_9}
The total length, including
retracings, of the parametrization $\vec{s}$ is
finite. Hence the curve ${\bf \Gamma}$ is rectifiable.
\end{corollary}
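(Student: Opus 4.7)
The plan is to read off both conclusions from Lemma 3.8 by passing from the Lipschitz property to bounded total variation, exactly as telegraphed by the sentence preceding the corollary. Fix a Lipschitz constant $L$ for $\vec{s}$ on $[0,1]$ supplied by that lemma.

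For the first assertion, given any dissection $\vec{t}=(t_0,\ldots,t_m)$ with $0=t_0<\cdots<t_m=1$, the Lipschitz bound yields
\[
\sum_{i=0}^{m-1}\lvert \vec{s}(t_{i+1})-\vec{s}(t_i)\rvert\;\le\; L\sum_{i=0}^{m-1}(t_{i+1}-t_i)\;=\;L.
\]
Taking the supremum over all such $\vec{t}$ shows that the total variation of $\vec{s}$ on $[0,1]$ is at most $L<\infty$, and this supremum is by definition the total length of $\vec{s}$ including retracings.

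For the rectifiability of ${\bf \Gamma}$, I would invoke Corollary 3.7 to pick a one-to-one parametrization $g:[a,b]\to\R^2$ of ${\bf \Gamma}$. For any dissection $\vec{u}=(u_0,\ldots,u_k)$ of $[a,b]$, each point $g(u_j)$ lies in ${\bf \Gamma}=\range(\vec{s})$, so I choose preimages $\theta_j\in[0,1]$ with $\vec{s}(\theta_j)=g(u_j)$ and then sort them into an ordered dissection $\vec{\theta}$ of $[0,1]$ (padding with $0$ and $1$ if necessary). The triangle inequality immediately gives $\mathcal{L}^g_{\vec{u}}({\bf \Gamma})\le\mathcal{L}^{\vec{s}}_{\vec{\theta}}({\bf \Gamma})\le L$. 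Taking the supremum over $\vec{u}$ yields $\mathcal{L}({\bf \Gamma})\le L<\infty$, so ${\bf \Gamma}$ is rectifiable.

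The only subtlety, and hence the main obstacle, is this second comparison: the standard fact that the intrinsic length of a simple curve never exceeds the total variation of any parametrization, even one that retraces. This is precisely what the paper uses tacitly in the sentence right before the corollary (and what \cite{Apos76a} supplies for the Lipschitz half); the sort-and-pull-back argument above is routine but warrants care so that the reordered $\theta_j$'s form a genuine dissection of $[0,1]$.
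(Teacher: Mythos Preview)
Your overall approach is exactly the paper's: the corollary carries no separate proof and is read off from the sentence immediately preceding it, which combines Lemma~3.8 (Lipschitz implies bounded total variation, citing \cite{Apos76a}) with the blanket fact that the length of a simple curve never exceeds the total variation of any of its parametrizations. Your first paragraph handles the Lipschitz half cleanly.

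Your sketch of the second fact, however, has a genuine gap. After picking \emph{arbitrary} preimages $\theta_j$ and sorting, $\mathcal{L}^{\vec s}_{\vec\theta}$ is the polygonal path through the point set $\{g(u_0),\dots,g(u_k)\}$ in a \emph{permuted} order, and the triangle inequality does not force that permuted path to dominate the curve-order path $\mathcal{L}^{g}_{\vec u}$. For instance, take $P_0=(0,0)$, $P_1=(10,0)$, $P_2=(5,\varepsilon)$, $P_3=(20,0)$ in curve order on a suitable simple arc: the path $P_0P_2P_1P_3$ has length about $20$, strictly shorter than the curve-order path $P_0P_1P_2P_3$ of length about $30$; and a retracing parametrization that runs $P_0\to P_1\to P_2\to P_1\to P_2\to P_3$ along the arc admits a choice of preimages whose sorted order is exactly $P_0,P_2,P_1,P_3$. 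The repair is not to sort but to choose the preimages monotone from the outset: set $\tau_0=0$ and inductively let $\tau_i$ be the first time after $\tau_{i-1}$ at which $\vec s$ hits $g(u_i)$. Such a time exists because $g(u_i)$ disconnects the simple curve ${\bf\Gamma}$ into a component containing $\vec s(\tau_{i-1})$ and one containing $\vec s(1)$, so the continuous $\vec s$ must cross it. With this $\vec\tau$ the identity $\mathcal{L}^{g}_{\vec u}=\sum_i|\vec s(\tau_{i+1})-\vec s(\tau_i)|$ is immediate, and the right-hand side is bounded by the total variation of $\vec s$.
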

\begin{lemma}\label{lm:3_10}
The curve ${\bf \Gamma}$ is smooth
except at the endpoint $\vec{s}(1)$.
\end{lemma}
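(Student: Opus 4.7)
The plan is to take $\vec{s}$ itself as a parametrization and read off the tangent direction at each point, using continuity of $\vec{v}=\vec{s}\,'$ (Lemma \ref{lm:3_8}). By Lemmas \ref{lm:3_5} and \ref{lm:3_6}, the only $t\in[0,1]$ at which $\vec{v}(t)=0$ are $t=0$, $t=1$, the $t_n$ for $n\in\Z^+$, and (when $n\notin K$) the additional points $d_i^{(n)},e_i^{(n)}$ inside the trivial $n$-th bump. At every other $t$ the tangent at $\vec{s}(t)$ is $\vec{v}(t)/|\vec{v}(t)|$, which varies continuously in $t$; the task thus reduces to understanding ${\bf\Gamma}$ at each rest point and checking that the tangents fit together continuously along ${\bf\Gamma}$.

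First I would dispose of the easy rest points. When $n\notin K$ the construction gives $a_x\equiv 0$ on $[t_n^-,t_n^+]$, so $\vec{s}$ maps this interval into a closed vertical segment on the $y$-axis; every rest point in that image has vertical tangent matching the adjacent $y$-axis segments. At $t=0$, $a_x\equiv 0$ on a right-neighborhood and $a_y(t)\approx(\pi/2)t$ from Construction \ref{con:3_1}, so a short Taylor estimate yields $\vec{s}(t)=(0,\pi t^3/12)+o(t^3)$, and the one-sided tangent at $\vec{s}(0)$ is vertical.

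The main case is an apex $\vec{s}(t_n)$ with $n\in K$, where $\vec{v}(t_n)=0$ although the particle is off the $y$-axis. Setting $A=2^{-(n+\tau(n))}$ and $s=t-t_n$, the local form of the construction gives $\xi_{t_n^-,t_n^+}(t_n+s)\approx(\pi/2)|s|$ and $a_y(t_n+s)\approx(\pi/2)s$, so $a_x(t_n+s)\approx-(A\pi/2)|s|$; double integration then yields
\[
 \vec{s}(t_n+\delta)-\vec{s}(t_n) \;\approx\; \tfrac{\pi}{12}\bigl(-A\,|\delta|^{3},\;\delta^{3}\bigr).
\]
From this expansion I would identify the tangent direction at $\vec{s}(t_n)$ and check that it is the common limit of $\vec{v}(t)/|\vec{v}(t)|$ as $t\to t_n$ from the two legs of the bump, so that the tangent field is continuous across $\vec{s}(t_n)$. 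Finally, at $\vec{s}(1)$ no tangent exists: every neighborhood of $\vec{s}(1)$ meets both the $y$-axis and bumps for arbitrarily large $n\in K$, so chord directions from $\vec{s}(1)$ oscillate rather than converge. The main obstacle is the apex case: pinning down the precise sense in which the tangent at each $\vec{s}(t_n)$, $n\in K$, exists and verifying that the resulting tangent field is genuinely continuous across every such apex. This is where the explicit $|\delta|^{3}$ versus $\delta^{3}$ scaling from the double integration, together with the smallness of $A=2^{-(n+\tau(n))}$, does the real work.
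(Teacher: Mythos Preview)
Your overall strategy matches the paper's: locate the rest set $Z=\{t\in(0,1):\vec v(t)=0\}$ and check that the one-sided limits of $\vec v(t)/|\vec v(t)|$ agree at each point of $Z$. Your enumeration of $Z$ is in fact more thorough than the paper's; the paper's proof simply asserts $Z=\{t_n:\tau(n)=\infty\}$ and thereby never confronts the apex $t_n$ with $n\in K$, even though Lemmas~\ref{lm:3_5} and~\ref{lm:3_6} give $v_x(t_n)=v_y(t_n)=0$ for \emph{every} $n$.

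The gap is at exactly the apex case you single out, and your own expansion exposes it. One integration of your approximations $a_x(t_n+s)\approx -(A\pi/2)|s|$ and $a_y(t_n+s)\approx(\pi/2)s$ gives
\[
\vec v(t_n+s)\;\approx\;\tfrac{\pi}{4}\bigl(-A\,s|s|,\;s^2\bigr),
\]
so $\vec v/|\vec v|\to(A,1)/\sqrt{A^2+1}$ as $s\to0^-$ but $(-A,1)/\sqrt{A^2+1}$ as $s\to0^+$. Equivalently, reparametrizing locally by $y$ yields $x-s_x(t_n)\approx -A\,|y-s_y(t_n)|$. For $n\in K$ we have $A=2^{-(n+\tau(n))}>0$, so the two one-sided tangent \emph{lines} differ and ${\bf\Gamma}$ has a genuine corner of interior angle $\pi-2\arctan A$ at $\vec s(t_n)$; the ``common limit'' you plan to verify does not exist. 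The $|\delta|^3$ versus $\delta^3$ mismatch in your displacement formula is the signature of this corner, not a mechanism for removing it---it traces back to the derivative jump of $\xi_{t_n^-,t_n^+}$ at its midpoint $t_n$. Your more honest bookkeeping thus uncovers a difficulty that the paper's proof sidesteps via its description of $Z$. (Your side claim about $\vec s(1)$ is also questionable: since $s_x(t_n)/\bigl(s_y(1)-s_y(t_n)\bigr)\to 0$, chord directions from $\vec s(1)$ actually converge to vertical.)
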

\begin{proof}
We have seen that ${\bf \Gamma}([0,t_1^-])$ is
simply a segment of the $y$-axis, and that
the vector velocity function $\vec{v}$ is continuous
on $[0,1]$. Since the set
\[Z=\myset{t\in(0,1) }{\vec{v}(t)=0}\]
has no accumulation points in $(0,1)$, it
therefore suffices to verify that, for each
$t^*\in Z$,
\begin{equation}\label{eq:3_1}
\lim_{t\rightarrow t^{*-}}\frac{\vec{v}(t) }{\card{\vec{v}(t) }}
=\lim_{t\rightarrow t^{*+}}\frac{\vec{v}(t) }{\card{\vec{v}(t) }},
\end{equation}
i.e., that the left and right tangents of
${\bf \Gamma}$ coincide at $\vec{s}(t^*)$. But this is
clear, because Lemmas \ref{lm:3_5} and \ref{lm:3_6}
tell us that
\[Z=\myset{t_n }{ n\in\Z^+\text{ and } \tau(n)=\infty},\]
and both sides of \eqref{eq:3_1} are $(0,1)$ at
all $t^*$ in this set.
\end{proof}

%
%
%


\begin{lemma}\label{lm:3_11}
The functions $\vec{a}, \vec{v}$, and $\vec{s}$
are computable in polynomial time. The total length including retracings, of $\vec{s}$ is computable in polynomial time.
\end{lemma}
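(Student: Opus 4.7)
The plan is to verify pointwise polynomial-time computability of $\vec{a}$, $\vec{v}$, $\vec{s}$ first, and then handle the total length. Given oracle access to $t \in [0,1]$ and precision $r$, I would first locate the active phase: since $t_n = 1 - e^{-n}$, the unique $n$ with $t_{n-1} \leq t < t_n$ satisfies $n = O(\log(1/(1-t)))$. When $1 - t < 2^{-(r+c)}$ for a suitable constant $c$, the Lipschitz bounds of Lemma \ref{lm:3_8} let us output the closed-form values of $\vec{a}, \vec{v}, \vec{s}$ at $1$ (obtained from Lemmas \ref{lm:3_5} and \ref{lm:3_6}) within tolerance $2^{-r}$. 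Otherwise $n$ and the subindex identifying the active piece of $\psi_{t_{n-1},t_n,n}$ are determined in time polynomial in $r$, and evaluating $a_y(t)$ reduces to one sine computation at polynomial precision. For $a_x(t)$, the coefficient $2^{-(n+\tau(n))}$ is handled by simulating $M_n$ for $n+r+O(1)$ steps: if $M_n$ halts, $\tau(n)$ is known and we evaluate in closed form; otherwise the coefficient is below $2^{-(n+r)}$, so outputting $0$ is within tolerance.

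For $\vec{v}$ and $\vec{s}$ I would exploit the telescoping already established in Lemmas \ref{lm:3_5} and \ref{lm:3_6}: $v_y(t_n) = 0$, $v_x(t) = s_x(t) = 0$ outside $\bigcup_n(t_n^-, t_n^+)$, and $s_y(t_n) = \frac{5^3(e-1)^3}{6^3 \cdot 8\pi}\sum_{k=1}^n e^{-3k}$ has an explicit closed form. Consequently each of $v_x(t), v_y(t), s_x(t), s_y(t)$ equals a closed-form sinusoidal expression over the single active phase, plus (for $s_y$) the rational partial sum of completed elevation increments. All four evaluations are polynomial time in $r$.

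For the total length $L = \int_0^1 |\vec{v}(t)|\, dt$ the idea is to use the decomposition
\[ L = \int_0^1 |v_y(t)|\, dt \;+\; \sum_{n \in K} E_n, \qquad E_n = \int_{t_n^-}^{t_n^+} \bigl( |\vec{v}| - |v_y| \bigr)\, dt, \]
valid since $v_x = 0$ off $\bigcup_{n \in K}(t_n^-, t_n^+)$. The first integral is the total vertical distance, which by Lemma \ref{lm:3_6} equals the geometrically convergent closed-form series $\sum_{n \geq 1}[(s_y(t_n) - s_y(t_{n-1})) + 2n\delta_n]$ with $\delta_n = \frac{(e-1)^3}{12^3 n^3 \cdot 8 \pi e^{3n}}$, polynomial-time computable by truncation at $n = O(r)$. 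For the correction sum, the pointwise bound $|\vec{v}| - |v_y| \leq |v_x|$ combined with Lemma \ref{lm:3_5}(4) gives $E_n \leq 2 s_x(t_n) = O(e^{-3n}\, 2^{-(n+\tau(n))})$, so truncating at $N = O(r)$ contributes error below $2^{-r-1}$. For $n \leq N$, simulating $M_n$ for $O(r+n)$ steps either certifies $E_n < 2^{-r}/N$ (when $M_n$ has not halted in time) or produces $\tau(n)$ exactly, reducing $E_n$ to the integral of a piecewise-analytic elementary function on $(t_n^-, t_n^+)$.

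The hard part will be computing each remaining nonzero $E_n$ to precision $2^{-r}/N$ in polynomial time. I would split $(t_n^-, t_n^+)$ at the $O(n)$ explicit zeros of $v_y$ (namely the points $d_i^{(n)}, e_i^{(n)}, t_n$ from Observation \ref{ob:3_3}). On each resulting subinterval $v_x$ is nonzero in the interior and $v_y$ keeps constant sign, so the integrand $\sqrt{v_x^2 + v_y^2} - |v_y|$ is real-analytic in the interior; a high-order adaptive quadrature (for example Clenshaw--Curtis) at $\mathrm{poly}(r)$ nodes with polynomial bit precision delivers the required precision. Summing the subinterval contributions yields $E_n$, and then summing over $n \leq N$ yields $L$, all in time polynomial in $r$.
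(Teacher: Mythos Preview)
Your argument for $\vec a$, $\vec v$, and $\vec s$ is exactly what the paper's one-line proof is gesturing at: locate the active phase, use the explicit sinusoidal pieces, simulate $M_n$ long enough to either learn $\tau(n)$ or certify that the $a_x$ coefficient is below the working precision, and for $s_y$ invoke the closed form $\sum_{i=1}^n e^{-3i}$ (which the paper singles out as the one nontrivial ingredient). On that part you are aligned with the paper, only more explicit.

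For the total length the paper gives literally nothing beyond the same citations of Observation~\ref{ob:3_4} and Lemmas~\ref{lm:3_5}--\ref{lm:3_6}, whereas you actually supply an argument: decompose $\int_0^1|\vec v|$ as $\int_0^1|v_y| + \sum_{n\in K}E_n$, evaluate the first term in closed form from Lemma~\ref{lm:3_6}, truncate the sum at $N=O(r)$ using $E_n\le 2s_x(t_n)$ from Lemma~\ref{lm:3_5}, and handle each surviving $E_n$ by splitting at the zeros of $v_y$ and applying high-order quadrature on each analytic piece. This is a genuinely different (and more complete) route than the paper's bare citation, which does not address why the Euclidean integral $\int\sqrt{v_x^2+v_y^2}$---as opposed to the separate vertical and horizontal displacements covered by Observation~\ref{ob:3_4}---is polynomial-time computable. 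One point deserves a sentence more of care in your write-up: on the two subintervals abutting $t_n$ both $v_x$ and $v_y$ vanish at the shared endpoint, so analyticity of $\sqrt{v_x^2+v_y^2}$ is not automatic there; you should note that each vanishes to exact order two (since $a_x(t_n)=a_y(t_n)=0$ while the one-sided derivatives $a_x'(t_n^\pm),a_y'(t_n^\pm)$ are nonzero), so $v_x^2+v_y^2=(t-t_n)^4(c+O(t-t_n))$ with $c>0$ and the square root stays analytic up to the endpoint. With that observation your quadrature argument goes through, and what you have written is substantially more than what the paper itself offers.
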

\begin{proof}This follows from Observation \ref{ob:3_4}, Lemmas \ref{lm:3_5} and \ref{lm:3_6}, and the polynomial time computability of $f(n)=\sum_{i=1}^n e^{-3i}$.
\end{proof}

\begin{definition}
A {\em modulus of uniform continuity} for a function $f:[a,b]\rightarrow\R^n$ is a
function $h:\N\times\N$ such that, for all $s,t\in[a,b]$ and $r\in\N$,
\[|s-t|\leq 2^{-h(r)}\implies |f(s)-f(t)|\leq 2^{-r}.\]
\end{definition}
It is well known (e.g., see \cite{Ko91})
that every computable function $f:[a,b]\rightarrow\R^n$
has a modulus of uniform continuity that is
continuous.
\begin{lemma}\label{lm:3_12}
Let $f:[a,b]\rightarrow\R^2$ be a parametrization of $\bf\Gamma$. If $f$ has bounded retracing
and a computable modulus of uniform continuity, then $\K\leq_\mathrm{T} f_y$, where $f_y$ is the vertical
component of $f$.
\end{lemma}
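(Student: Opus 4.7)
The plan is to use oracle access to $f_y$ to decide membership in $K$ by counting, for each input $n$, how many complete up-down traversals $f_y$ makes over the narrow $y$-range $[y_n - \delta_n,\, y_n]$, where $y_n := s_y(t_n)$ and
\[
 \delta_n := s_y(t_n) - s_y(e_0^{(n)}) = \frac{(e-1)^3}{12^3 n^3 \cdot 8\pi e^{3n}}
\]
by Lemma~\ref{lm:3_6}. The decisive geometric dichotomy, extracted from Lemmas~\ref{lm:3_5} and~\ref{lm:3_6} together with Observation~\ref{ob:3_3}, is the following: for all sufficiently large $n$, the portion of $\mathbf{\Gamma}$ whose $y$-coordinate lies in $[y_n - \delta_n, y_n]$ is exactly the vertical line segment $L_n = \{0\} \times [y_n - \delta_n, y_n]$ when $n \notin K$ (because the horizontal acceleration amplitude $2^{-(n+\tau(n))}$ vanishes), whereas when $n \in K$ it is a curvy sub-arc containing $n$ pairwise distinct ``bottom'' points $B_0, \ldots, B_{n-1}$, each at height $y_n - \delta_n$ but with $n$ distinct positive $x$-coordinates, and any two consecutive arches meet only at a ``top'' point of height $y_n$. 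The fact that no later stage of the curve re-enters this horizontal strip uses $\delta_{n+1} < y_{n+1} - y_n$, which is immediate from the exponential decay in Lemma~\ref{lm:3_6}.

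For $n \in K$, any parametrization $f$ of $\mathbf{\Gamma}$ must visit each of the $n$ distinct bottoms, yielding times $s_0 < \cdots < s_{n-1}$ in $[a,b]$ with $f_y(s_i) = y_n - \delta_n$. Since $\mathbf{\Gamma}$ is a simple curve (Corollary~\ref{co:3_7}) and the intrinsic sub-arc of $\mathbf{\Gamma}$ between any two distinct bottoms passes through at least one top, connectedness of $f$'s image on $(s_i, s_{i+1})$ forces $f_y$ to attain the value $y_n$ inside each gap, so $f_y$ performs at least $n - 1$ full up-down traversals of $[y_n - \delta_n, y_n]$. For $n \notin K$, the identity $f^{-1}(L_n) = f_y^{-1}([y_n - \delta_n, y_n])$ implies that the maximal sub-interval of any full up-down traversal on which $f_y$ stays in $[y_n - \delta_n, y_n]$ has $f$-image equal to $L_n$, a sub-curve of positive length; the bounded-retracing hypothesis on $f$, with some fixed witness $m$, then caps the number of disjoint such intervals by $m$.

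The reduction $M$ hardcodes $m$ together with a finite lookup table for $K \cap \{1, \ldots, m+1\}$. On input $n > m+1$, it computes rational approximations to $y_n$ and $\delta_n$ (both polynomial-time computable by Observation~\ref{ob:3_4} and Lemma~\ref{lm:3_6}), then uses the computable modulus of uniform continuity $h$ for $f_y$ to sample $f_y$ on a grid in $[a,b]$ of spacing $2^{-h(r)}$ with $r$ large enough that sample errors stay below $\delta_n/8$, and counts the number $N$ of disjoint sampled full up-down traversals of $[y_n - \delta_n, y_n]$. It outputs $1$ iff $N \geq m + 1$; by the dichotomy this correctly decides $n \in K$.

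I expect the most delicate step to be the $n \notin K$ bookkeeping: one must show that each observed full up-down traversal yields a \emph{disjoint} time-interval whose $f$-image equals $L_n$ \emph{exactly}, as the definition of $m$-fold retracing demands. This is handled by passing from any traversal interval to its maximal sub-interval on which $f_y$ stays in $[y_n - \delta_n, y_n]$ (so that $f$ stays in $L_n$) and observing that $f_y$ still achieves both endpoints of this $y$-range on the sub-interval, forcing $f$-image to equal $L_n$. The other steps --- approximating $y_n, \delta_n$, choosing the sampling precision, and verifying the counts --- are routine given the computable modulus $h$.
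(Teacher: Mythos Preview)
Your approach is essentially the paper's: sample $f_y$ on a grid of mesh $2^{-h(r)}$, label samples as high (near $s_y(t_n)$) or low (near $s_y(e_0^{(n)})$), hardcode $m$ together with a finite table for small $n$, and accept iff a long enough alternating high/low sequence appears. Your $n\in K$ analysis, using connectedness of $f([s_i,s_{i+1}])$ inside the simple arc $\mathbf\Gamma$ to force a top-visit between any two distinct bottom-visits, is more explicit than the paper's one-line appeal to Observation~\ref{ob:3_3} and Lemmas~\ref{lm:3_5}--\ref{lm:3_6}.

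One small gap in your $n\notin K$ bookkeeping: passing to a maximal sub-interval on which $f_y\in[y_n-\delta_n,y_n]$ does \emph{not} guarantee that $f_y$ attains both endpoints $y_n$ and $y_n-\delta_n$ there, because a sampled ``high'' or ``low'' value may lie strictly inside the open strip $(y_n-\delta_n,y_n)$. Different observed traversals could then produce different proper sub-segments of $L_n$ as their $f$-images, whereas the definition of $m$-fold retracing requires the images $f(I_i)$ to coincide exactly. The standard repair is to fix two intermediate levels $c^-<c^+$ strictly separating the low band from the high band (for instance $c^\pm=y_n-\delta_n/2\pm\delta_n/8$); each sampled high--low transition then yields, by the intermediate value theorem, a sub-interval on which $f_y$ runs exactly from $c^+$ to $c^-$ without leaving $[c^-,c^+]$, so that the $f$-image is the \emph{fixed} positive-length segment $\{0\}\times[c^-,c^+]\subset L_n$, and the retracing bound applies directly. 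With this patch (and a harmless shift of the acceptance threshold by one) your argument goes through; the paper's own proof is terse enough that it does not spell this step out either.
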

\begin{proof}
Assume the hypothesis. Then there exist $m\in\Z^+$ and $h:\N\rightarrow\N$ such that
$f$ does not have $m$-fold retracing and $h$ is a computable modulus of uniform
continuity for $f$. Note that $h$ is also a modulus of uniform continuity for $f_y$.

Let $M$ be an oracle Turing machine that, given an oracle $\mathcal{O}_g$ for a function $g:[a,b]\rightarrow\R$,
implements the algorithm in Figure \ref{fig:3_w}. The
key properties of this algorithm's choice of $r$ and $\Delta$ are that the following hold when $g=f_y$.
\begin{enumerate}[(i)]
\item
For each time $t$ with $f_y(t)=s_y(t_n)$, there
is a nearby time $\tau_j$ with $j$ high. Similarly for $f_y(t)=s_y(e_0^{(n)})$ and $j$ low.
\item
For each high $j$, $|f_y(\tau_j)-s_y(t_n)|\leq 3\cdot 2^{-r}$.
Similarly for each low $j$ and $s_y(e_0^{(n)})$.
\item
No $j$ can be both high and low.
\end{enumerate}
Now let $n\in\Z^+$. We show that $M^{\mathcal{O}_{f_y}}(n)$
accepts if $n\in\K$ and rejects if $n\notin\K$.
This is clear if $n\leq m$, so assume that $n>m$.
\begin{figure}[htbp]
\begin{center}
\begin{tabbing}
abc \=abc\=abc\=abc\=abc\=abc\=abc \kill\\
\>       {\bf input} $n \in\Z^+$;\\
\>       {\bf if} $n\leq m$ {\bf then}\\
\>       use a finite lookup table to accept if $n\in\K$ and reject if $n\notin\K$\\
\>       {\bf else}\\
\>       {\bf begin}\\
\>\>         $r$:= the least positive integer such that $2^{3-r}< s_y(t_n)-s_y(e_0^{(n)})$;\\
\>\>         $\Delta$:=$2^{-h(r)}$;\\
\>\>         {\bf for} $0\leq j\leq (b-a)/\Delta$ {\bf do}\\
\>\>         {\bf begin}\\
\>\>\>            $\tau_j$:=$a+\Delta_j$;\\
\>\>\>            call $j$ {\bf high} if $|\mathcal{O}_g(\tau_j, r)-s_y(t_n)|<2^{1-r}$\\
\>\>\>            call $j$ {\bf low} if $|\mathcal{O}_g(\tau_j,r)-s_y(e_0^{(n)}| < 2^{1-r}$\\
\>\>         {\bf end};\\
\>\>         {\bf if} there is a sequence $0<j_0<j_1<\cdots <j_m$ in which $j_i$ is high for all even $i$ and low for all odd $i$\\
\>\>         {\bf then} accept\\
\>\>         {\bf else} reject\\
\>     {\bf end}.\\
\end{tabbing}
\caption{Algorithm for $M^{\mathcal{O}_g}(n)$ in the proof of Lemma \ref{lm:3_12}.}
\label{fig:3_w}
\end{center}
\end{figure}

If $n\in\K$, then Observation \ref{ob:3_3}, Lemma \ref{lm:3_5}, and Lemma \ref{lm:3_6}
tell us that $M^{\mathcal{O}_{f_y}}(n)$ accepts.
If $n\notin\K$, then the fact that $f$ does not have $m$-fold retracing tells us that
$M^{\mathcal{O}_{f_y}(n)}$ rejects.
\end{proof}
\begin{proof}[Proof of Theorem \ref{th:3_2}]
Part 1 follows from Lemmas \ref{lm:3_8} and \ref{lm:3_11}. Part 2 follows
from Lemma \ref{lm:3_11}. Part 3 follows from Corollaries \ref{co:3_7} and \ref{co:3_9} and Lemma \ref{lm:3_10}.
Part 4 follows from Lemma \ref{lm:3_12}, the
fact that every computable function $g:[a,b]\rightarrow\R^2$
has a computable modulus of uniform continuity,
and the fact that $A$ is decidable wherever $A\leq_{\mathrm{T}}g$ and $g$ is computable.
\end{proof}
\end{section} 
\begin{section}{Lower Semicomputability of Length}\label{se:4}
In this section we prove that every computable curve $\Gamma$ has a lower semicomputable length. Our proof is somewhat involved, because our result holds even if every computable parametrization of $\Gamma$ is retracing.

\begin{construction}\label{con:4_1}
Let $f:[0,1]\rightarrow \R^n$ be a computable function.
Given an oracle Turing machine $M$ that computes $f$
and a computable modulus $m:\N\rightarrow \N$ of the
uniform continuity of $f$, the $(M,m)$-cautious
polygonal approximator
of $\range(f)$ is the function
    $\pi_{M,m}:\N\rightarrow \{polygonal\; paths\}$
computed by the following algorithm.

\begin{tabbing}
abc \=abc\=abc\=abc\=abc\=abc\=abc \kill\\
\>       {\bf input} $r \in\N$;\\
\>       $S:=\{\}$; // $S$ may be a multi-set\\
\>       {\bf for} $i$:=$0$ {\bf to} $2^{m(r)}$ {\bf do}\\
\>\>         $a_i:=i2^{-m(r)}$;\\
\>\>         use $M$ to compute $x_i$ with\\
\>\>\>         $|x_i-f(a_i)|\leq 2^{-(r+m(r)+1)} $;\\
\>\> add $x_i$ to S;\\
\> output a longest path inside a minimum spanning tree of $S$.\\
\end{tabbing}
\end{construction}

\begin{defn}
Let $(X,d)$ be a  metric space.
Let  $\Gamma\subseteq X$ and $\epsilon>0$.
Let
\[\Gamma(\epsilon) =\myset{p\in X}{ \inf_{p' \in \Gamma} d(p,p')\leq \epsilon }\]
be the {\em Minkowski sausage} of $\Gamma$ with radius $\epsilon$.

Let $\hd:\mathcal{P}(X)\times\mathcal{P}(X)\rightarrow\R$ be
such that for all $\Gamma_1, \Gamma_2\in \mathcal{P}(X)$
\[\hd(\Gamma_1,\Gamma_2) =
\inf\myset{\epsilon }{ \Gamma_1\subseteq \Gamma_2(\epsilon)\text{ and }
\Gamma_2\subseteq \Gamma_1(\epsilon)}.\]
\end{defn}
Note that $\hd$ is the {\em Hausdorff distance} function.

Let $\mathcal{K}(X)$ be the set of nonempty compact subsets of  $X$.
Then $(\mathcal{K}(X), \hd)$ is a metric space \cite{Edgar90}.

\begin{theorem}\label{th:compactspace}(Frink \cite{Frink:TL}, Michael \cite{Michael:TSS}).
Let $(X, d)$ be a compact metric space.
Then $(\mathcal{K}(X), \hd)$ is a compact metric space.
\end{theorem}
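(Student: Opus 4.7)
The plan is to establish compactness of $(\mathcal{K}(X), \hd)$ by verifying the two halves of the classical characterization in metric spaces: total boundedness and completeness. I first need to confirm $\hd$ really is a metric on $\mathcal{K}(X)$ — symmetry and the triangle inequality are direct from the definition, and the identity of indiscernibles uses compactness (so that $\hd(\Gamma_1,\Gamma_2)=0$ forces $\Gamma_1 = \Gamma_2$ rather than just equality of closures). This part is routine.

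For total boundedness, I would fix $\epsilon > 0$ and use compactness of $X$ to select a finite $\epsilon$-net $F \subseteq X$. Let $\mathcal{F}$ be the (finite) collection of nonempty subsets of $F$. Given any $K \in \mathcal{K}(X)$, define $F_K = \{y \in F : d(y,K) \leq \epsilon\}$. Then every point of $K$ is within $\epsilon$ of some element of $F$ (so that element lies in $F_K$), and every point of $F_K$ is within $\epsilon$ of $K$ by construction; hence $\hd(K, F_K) \leq \epsilon$ and $\mathcal{F}$ is a finite $\epsilon$-net in $(\mathcal{K}(X), \hd)$.

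For completeness, take a Cauchy sequence $(K_n)$ in $(\mathcal{K}(X), \hd)$ and define the candidate limit
\[
K = \bigcap_{n \geq 1} \overline{\bigcup_{m \geq n} K_m}.
\]
The plan is to verify three things: $K$ is compact and nonempty (compactness is automatic as a closed subset of the compact space $X$; nonemptiness follows by picking $x_n \in K_n$, extracting a convergent subsequence using compactness of $X$, and checking the limit lies in $K$); every point of $K$ is well-approximated by points of $K_n$ for large $n$ (i.e.\ $K \subseteq K_n(\epsilon)$ for $n$ large); and conversely $K_n \subseteq K(\epsilon)$ for $n$ large. This last inclusion is the delicate one and is where I expect the main difficulty: given $x \in K_n$, I would use the Cauchy property to build a sequence $x = x_n, x_{n+1}, x_{n+2}, \ldots$ with $x_{k+1} \in K_{k+1}$ and $d(x_k, x_{k+1})$ summable (choosing the indices along a subsequence with $\hd(K_{n_k}, K_{n_{k+1}}) < 2^{-k}$), so that this sequence is Cauchy in $X$, converges to a point of $K$, and its limit lies within a prescribed distance of $x$.

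Combining total boundedness with completeness yields compactness of $(\mathcal{K}(X), \hd)$, completing the argument. The main obstacle is the construction of approximating sequences for the second inclusion in the completeness step; the book-keeping with the Cauchy rate is what makes this nontrivial, while total boundedness and the metric space axioms are comparatively straightforward.
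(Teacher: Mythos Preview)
Your proposal is a correct, standard proof of this classical result. Note, however, that the paper does not supply its own proof of this theorem: it is stated with attribution to Frink and Michael and used as a black box, so there is no in-paper argument to compare against. Your total-boundedness-plus-completeness approach is the textbook route and is entirely appropriate here.
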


\begin{defn}
Let $\mathcal{RC}$ be the set of all simple rectifiable curves in $\R^n$.
\end{defn}

\begin{theorem}\label{th:liminflength}(\cite{Tricot95} page 55).
Let $\Gamma\in\mathcal{RC}$.
Let $\{ \Gamma_n\}_{n\in\N}\subseteq \mathcal{RC}$ be a sequence of rectifiable curves
such that $\limn \hd(\Gamma_n, \Gamma) =0$.
Then $\CH^1(\Gamma) \leq \liminfn \CH^1(\Gamma_n)$.
\end{theorem}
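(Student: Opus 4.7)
The plan is to prove the statement by a compactness argument based on Arzel\`a--Ascoli. Set $L := \liminfn \CH^1(\Gamma_n)$; we may assume $L < \infty$ (otherwise the inequality is vacuous) and, by passing to a subsequence, that $\CH^1(\Gamma_n) \to L$. Since each $\Gamma_n$ is simple and rectifiable, it admits a constant-speed parametrization $g_n : [0,1] \to \R^d$ with Lipschitz constant $\CH^1(\Gamma_n)$. The family $\{g_n\}$ is therefore uniformly Lipschitz (with constant $L + o(1)$) and uniformly bounded (because $\hd(\Gamma_n,\Gamma) \to 0$ confines the $\Gamma_n$ to a bounded neighborhood of the compact set $\Gamma$), hence equicontinuous.

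By Arzel\`a--Ascoli, a further subsequence $g_{n_k}$ converges uniformly to some $g : [0,1] \to \R^d$, and uniform convergence preserves the Lipschitz bound, so $g$ is $L$-Lipschitz. The critical step is then to verify that $\Gamma \subseteq \range(g)$. Given any $p \in \Gamma$, the Hausdorff convergence furnishes points $p_{n_k} \in \Gamma_{n_k}$ with $p_{n_k} \to p$; writing $p_{n_k} = g_{n_k}(t_{n_k})$ and extracting a subsequential limit $t_{n_{k_j}} \to t^\ast$ in the compact interval $[0,1]$, uniform convergence gives $g(t^\ast) = \lim_j g_{n_{k_j}}(t_{n_{k_j}}) = p$.

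The proof then concludes by invoking the standard fact that the image of $[0,1]$ under an $L$-Lipschitz map has $\CH^1$-measure at most $L$, so by monotonicity of Hausdorff measure,
\[
\CH^1(\Gamma) \;\leq\; \CH^1(\range(g)) \;\leq\; L \cdot \CH^1([0,1]) \;=\; L \;=\; \liminfn \CH^1(\Gamma_n).
\]
The main obstacle is the inclusion $\Gamma \subseteq \range(g)$; this step is precisely where the Hausdorff convergence hypothesis is essential, and it is what makes the argument robust against parametrizations $g_n$ that retrace arbitrary portions of $\Gamma_n$. Note that the limit $g$ need not be injective---it may retrace $\Gamma$ or even cover extra ``phantom'' points---which is exactly why the inequality is one-sided rather than an equality.
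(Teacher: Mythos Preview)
Your argument is correct and is, in fact, the standard proof of lower semicontinuity of length with respect to Hausdorff convergence. The reduction to $L<\infty$, the extraction of constant-speed parametrizations, the Arzel\`a--Ascoli step, the inclusion $\Gamma\subseteq\range(g)$ via a diagonal/compactness argument, and the final Lipschitz bound on $\CH^1$ are all sound.

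One point worth noting: the paper does not actually supply its own proof of this theorem. It is quoted verbatim from Tricot's monograph (the citation ``\cite{Tricot95} page 55'' in the statement) and used as a black box in the subsequent lemmas. So there is no ``paper's proof'' to compare against; your proposal fills in what the authors deliberately outsourced. The approach you give is essentially the one found in standard references (including Tricot), so you are not taking a genuinely different route---you are reconstructing the cited argument.

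A minor stylistic remark: in the final display you write $\CH^1(\range(g))\le L\cdot\CH^1([0,1])$, which is correct but slightly roundabout. It suffices to say directly that an $L$-Lipschitz image of an interval of length $1$ has $\CH^1$-measure at most $L$; invoking $\CH^1([0,1])=1$ explicitly is fine but unnecessary.
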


This theorem has the following consequence.
\begin{theorem}\label{th:4_4}
Let $\Gamma\in\mathcal{RC}$.
For all $\epsilon>0$, there exists $\delta>0$ such that
for all $\Gamma'\in\mathcal{RC}$, if $\hd(\Gamma, \Gamma')<\delta$,
then $\CH^1(\Gamma')> \CH^1(\Gamma)-\epsilon$.
\end{theorem}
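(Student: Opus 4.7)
The plan is to derive Theorem \ref{th:4_4} as an immediate contrapositive consequence of Theorem \ref{th:liminflength}. Since Theorem \ref{th:liminflength} is a $\liminf$ statement, it is naturally equivalent to a lower semicontinuity statement of precisely the form asserted in Theorem \ref{th:4_4}, so no further geometric analysis should be required.

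More explicitly, I would argue by contradiction. Suppose Theorem \ref{th:4_4} fails for some $\Gamma \in \mathcal{RC}$ and some $\epsilon > 0$. Then for every $n \in \Z^+$ (taking $\delta = 1/n$), there exists a rectifiable simple curve $\Gamma_n \in \mathcal{RC}$ with
\[
\hd(\Gamma, \Gamma_n) < \tfrac{1}{n} \quad \text{and} \quad \CH^1(\Gamma_n) \leq \CH^1(\Gamma) - \epsilon.
\]
The first condition gives $\limn \hd(\Gamma_n, \Gamma) = 0$, so Theorem \ref{th:liminflength} applies to the sequence $\{\Gamma_n\}_{n \in \N}$ and yields
\[
\CH^1(\Gamma) \leq \liminfn \CH^1(\Gamma_n) \leq \CH^1(\Gamma) - \epsilon,
\]
which is a contradiction since $\epsilon > 0$ and $\CH^1(\Gamma) < \infty$ (because $\Gamma \in \mathcal{RC}$).

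I do not anticipate any real obstacle in this argument; the only point worth double-checking is that the sequence $\{\Gamma_n\}$ produced by the contrapositive really lies in $\mathcal{RC}$ (so that Theorem \ref{th:liminflength} applies), but this is built into our negation of the conclusion. The finiteness of $\CH^1(\Gamma)$, needed to subtract $\epsilon$ without ambiguity, is likewise guaranteed by $\Gamma \in \mathcal{RC}$.
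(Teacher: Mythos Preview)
Your argument is correct and is exactly what the paper intends: the paper states Theorem~\ref{th:4_4} immediately after Theorem~\ref{th:liminflength} with the remark ``This theorem has the following consequence,'' and does not give a separate proof. Your contrapositive/sequence argument is the standard (and essentially only) way to unpack that remark, so nothing further is needed.
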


In the following, we prove a few technical lemmas that lead to Lemma \ref{lm:extendedlengthequality},
which plays an important role in proving Theorem \ref{th:celength}.

\begin{lemma}\label{lm:propersubset}
Let $\Gamma\in\mathcal{RC}$.
Let $p_0, p_1,\in\Gamma$ be its two endpoints.
Let $\Gamma'\subsetneq \Gamma$ such that $p_0, p_1\in \Gamma'$.
Then $\Gamma'\notin \mathcal{RC}$.
\end{lemma}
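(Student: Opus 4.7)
The plan is to argue purely topologically: I will show $\Gamma'$ is disconnected, which already prevents it from being a curve (let alone a simple rectifiable one). Because $\Gamma\in\mathcal{RC}$, fix a one-to-one continuous parametrization $f:[a,b]\to\R^n$ of $\Gamma$. Since $[a,b]$ is compact and $f$ is a continuous bijection onto $\Gamma$, $f$ is automatically a homeomorphism onto $\Gamma$, and after possibly reversing its direction I may assume $f(a)=p_0$ and $f(b)=p_1$.

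Since $\Gamma'\subsetneq\Gamma$, pick any $q\in\Gamma\setminus\Gamma'$. The hypothesis $p_0,p_1\in\Gamma'$ forces $q\notin\{p_0,p_1\}$, so $q=f(t^*)$ for a unique $t^*\in(a,b)$. The homeomorphism $f$ carries the disjoint decomposition $[a,b]\setminus\{t^*\}=[a,t^*)\cup(t^*,b]$ to the disjoint decomposition
\[\Gamma\setminus\{q\}\;=\;f([a,t^*))\,\cup\,f((t^*,b]),\]
in which both pieces are nonempty and open in the subspace $\Gamma\setminus\{q\}$. Hence $\Gamma\setminus\{q\}$ is disconnected, and the endpoints $p_0=f(a)$ and $p_1=f(b)$ lie in distinct connected components.

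To finish, suppose for contradiction that $\Gamma'\in\mathcal{RC}$. Then $\Gamma'$ is the continuous image of a closed interval, and such an image is connected. But $\Gamma'\subseteq\Gamma\setminus\{q\}$, so the connected set $\Gamma'$ would have to lie inside a single connected component of $\Gamma\setminus\{q\}$, contradicting $\{p_0,p_1\}\subseteq\Gamma'$. Therefore $\Gamma'\notin\mathcal{RC}$.

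I do not anticipate a real obstacle here. The only subtle ingredient is that a continuous bijection from the compact set $[a,b]$ is a homeomorphism onto its image, which is standard. The identification $\{p_0,p_1\}=\{f(a),f(b)\}$ is automatic because in a simple arc the endpoints are precisely the two non-cut points, and these are exactly the images of $a$ and $b$ under any one-to-one parametrization.
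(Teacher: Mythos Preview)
Your proof is correct and follows essentially the same idea as the paper's: both remove a point $q\in\Gamma\setminus\Gamma'$ to separate $\Gamma$ into two pieces containing $p_0$ and $p_1$ respectively, and then observe that no curve (being connected) can sit inside this separated set while containing both endpoints. Your version is in fact a bit cleaner, since you argue directly via connectedness and the homeomorphism $f$, whereas the paper first splits off the case that $\Gamma'$ is not closed and then phrases the separation in terms of disjoint closed sets $\gamma(I_0),\gamma(I_1)$; but the underlying argument is the same.
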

\begin{proof}
If $\Gamma'$ is not closed, then we are done.
Assume that $\Gamma'$ is closed.
Let $\gamma$ be a parametrization of $\Gamma$ such that
$\gamma(0) = p_0$ and $\gamma(1)=p_1$.

Since $\Gamma'\neq \Gamma$ and  $p_0, p_1\in \Gamma'$,
$\gamma^{-1}(\Gamma') \subseteq I_0\cup I_1$,
where $I_0\subseteq [0,1]$ and $I_1\subseteq [0,1]$ are
closed and disjoint.

It is easy to see that $\gamma(I_0)$ and $\gamma(I_1)$ are
closed and disjoint.
And thus, for any continuous function $\gamma':[0,1]\rightarrow \R^n$,
$\gamma'^{-1}(\gamma(I_0))$ and $\gamma'^{-1}(\gamma(I_1))$
are closed and disjoint.
Therefore, for any continuous function $\gamma':[0,1]\rightarrow \R^n$,
$\gamma^{-1}(\Gamma')\neq [0,1]$, i.e., $\Gamma'\notin \mathcal{RC}$.
\end{proof}

\begin{lemma}\label{lm:connected}
Let $\Gamma\in\mathcal{RC}$.
Let $\Gamma'\subseteq\Gamma$ be a connected compact set.
Then $\Gamma'\in \mathcal{RC}$.
\end{lemma}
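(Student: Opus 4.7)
The plan is to pull $\Gamma'$ back through a one-to-one parametrization of $\Gamma$, observe that the preimage is forced to be a closed subinterval, use that subinterval to supply a one-to-one parametrization of $\Gamma'$, and invoke monotonicity of Hausdorff measure for rectifiability.

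Since $\Gamma\in\mathcal{RC}$, first fix a one-to-one continuous parametrization $\gamma:[0,1]\to\R^n$ of $\Gamma$ (rescaling the domain to $[0,1]$ if necessary). Because $[0,1]$ is compact and $\R^n$ is Hausdorff, $\gamma$ is a continuous bijection onto $\Gamma$ from a compact space to a Hausdorff one, hence a homeomorphism onto $\Gamma$. Consequently $T=\gamma^{-1}(\Gamma')$ is a nonempty, compact, connected subset of $[0,1]$, and therefore $T=[a,b]$ for some $0\le a\le b\le 1$.

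Assume $a<b$; the degenerate case where $\Gamma'$ is a single point lies outside $\mathcal{RC}$ by the paper's definition of a curve and is outside the intended scope of the lemma. Define $\gamma':[0,1]\to\R^n$ by $\gamma'(t)=\gamma(a+(b-a)t)$. Then $\gamma'$ is continuous and one-to-one, with range $\Gamma'$, so $\Gamma'$ is a simple curve. For rectifiability, observe that $\mathcal{L}(\Gamma')=\mathcal{H}^1(\Gamma')\le\mathcal{H}^1(\Gamma)=\mathcal{L}(\Gamma)<\infty$ by monotonicity of Hausdorff measure under set inclusion (together with the equality $\mathcal{H}^1=\mathcal{L}$ on simple curves recalled in Section \ref{se:2}), so $\Gamma'\in\mathcal{RC}$.

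The only substantive step is showing that $\gamma^{-1}(\Gamma')$ is a single closed interval; the main obstacle is really just recognizing that this requires upgrading the continuous bijection $\gamma$ to a homeomorphism so that preimages of connected sets are connected. The remaining assertions (that a restriction of a one-to-one continuous map is a one-to-one continuous parametrization, and that a subset of a rectifiable set has finite $1$-dimensional Hausdorff measure) are routine.
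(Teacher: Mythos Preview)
Your proof is correct and follows essentially the same strategy as the paper: both restrict a one-to-one parametrization $\gamma$ of $\Gamma$ to a closed subinterval $[a,b]$ and reparametrize linearly to obtain a one-to-one parametrization of $\Gamma'$. The one notable difference is in how the subinterval is identified. You invoke the standard fact that a continuous bijection from a compact space to a Hausdorff space is a homeomorphism, so $\gamma^{-1}(\Gamma')$ is immediately compact and connected, hence a closed interval. The paper instead sets $a=\inf\gamma^{-1}(\Gamma')$, $b=\sup\gamma^{-1}(\Gamma')$ and argues directly that $\gamma([a,b])=\Gamma'$: if some $\gamma'(t')$ with $t'\in(0,1)$ were missing from $\Gamma'$, then injectivity of $\gamma'$ together with compactness would split $\Gamma'$ into two pieces at positive distance, contradicting connectedness. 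Your route is cleaner, packaging that separation argument into the homeomorphism theorem; the paper's route is more hands-on but self-contained. Neither version addresses the degenerate single-point case, which you rightly flag as outside the lemma's intended scope, and the paper does not spell out the rectifiability step that you supply via monotonicity of $\mathcal{H}^1$.
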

\begin{proof}
Let $\gamma$ be the parametrization of $\Gamma$.

Let $a =\inf \{ \gamma^{-1}(\Gamma')\}$
and let $b = \sup \{ \gamma^{-1}(\Gamma')\}$.

Let $\gamma':[0,1]\rightarrow \R^n$ be such that for all $t\in[0,1]$
\[\gamma'(t) = \gamma(a+t(b-a)).\]
Then $\gamma'$ defines a curve and
we show that $\gamma'([0,1])=\Gamma'$.

It is clear that $\Gamma'\subseteq \gamma'([0,1])$.
Since $\Gamma'$ is compact, we know that
$\gamma'(0), \gamma'(1)\in\Gamma'$.

Suppose for some $t'\in(0,1)$, $\gamma'(t')\notin \Gamma'$.
Since $\Gamma'$ is compact, there exists $\epsilon>0$ such that
$\gamma'([ t'-\epsilon , t'+\epsilon]) \cap \Gamma' =\varnothing$.
Then
$\Gamma'\subseteq \gamma'([0, t'-\epsilon))\cup \gamma'((t'+\epsilon, 1])$.
Since $\gamma'$ is one-one,
\[\hd(\gamma'([0, t'-\epsilon)),  \gamma'((t'+\epsilon, 1])  ) >0.\]
Hence,
\[\hd(\Gamma'\cap \gamma'([0, t'-\epsilon)),  \Gamma'\cap \gamma'((t'+\epsilon, 1])  ) >0.\]
Thus, $\Gamma'$ cannot be connected.

Therefore, if $\Gamma'$ is connected,
then $\Gamma'= \gamma'([0,1])$ and hence $\Gamma'\in\mathcal{RC}$.
\end{proof}

\begin{lemma}\label{lm:connectedlimit}
Let $\Gamma_0, \Gamma_1,\dots$ be a convergent sequence of compact sets in compact metric
space $(X,d)$ that is eventually connected.
Let $\Gamma = \limn \Gamma_n$.
Then $\Gamma$ is connected.
\end{lemma}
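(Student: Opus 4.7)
My plan is to argue by contradiction. Suppose $\Gamma$ is not connected. Then $\Gamma = A \cup B$ for some nonempty, disjoint, relatively closed subsets $A, B$; since $\Gamma$ is compact, $A$ and $B$ are themselves compact. Because they are disjoint compact subsets of a metric space, the distance
\[
\delta \;=\; \inf\{\, d(a,b) : a \in A,\; b \in B\,\}
\]
is strictly positive. I would then fix $\epsilon$ with $0 < \epsilon < \delta/3$, and choose $N$ large enough that for every $n \geq N$ we have both $\Gamma_n$ connected (by hypothesis) and $\hd(\Gamma_n, \Gamma) < \epsilon$ (by the assumed Hausdorff convergence).

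Next I would introduce the two open ``sausage'' neighborhoods
\[
U \;=\; \{\, x \in X : d(x,A) < \epsilon\,\}, \qquad V \;=\; \{\, x \in X : d(x,B) < \epsilon\,\},
\]
and observe that $U \cap V = \emptyset$ since any point in both would force a pair $(a,b) \in A \times B$ with $d(a,b) < 2\epsilon < \delta$, contradicting the definition of $\delta$. The goal is now to show that for $n \geq N$, the connected set $\Gamma_n$ is split nontrivially by the disjoint open pair $(U,V)$, which is the desired contradiction.

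For this splitting, I would use $\hd(\Gamma_n, \Gamma) < \epsilon$ in both directions. First, $\Gamma_n \subseteq \Gamma(\epsilon)$, so every point of $\Gamma_n$ lies within $\epsilon$ of some point of $A \cup B$ and hence lies in $U \cup V$. Second, $\Gamma \subseteq \Gamma_n(\epsilon)$, so each point of the nonempty set $A$ has a witness in $\Gamma_n$ within $\epsilon$, producing a point of $\Gamma_n \cap U$; and likewise $\Gamma_n \cap V \neq \emptyset$. Thus $\Gamma_n = (\Gamma_n \cap U) \cup (\Gamma_n \cap V)$ is a partition of $\Gamma_n$ into two nonempty sets that are open in $\Gamma_n$, contradicting its connectedness. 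This contradiction establishes that $\Gamma$ must be connected.

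I do not expect any real obstacle here; the proof is essentially the standard argument that a Hausdorff limit of connected compacta in a compact metric space is connected. The only small points to be careful about are (a) nonemptyness of $\Gamma$, which follows because $\mathcal{K}(X)$ consists of nonempty compact sets and is itself compact (Theorem~\ref{th:compactspace}), so the limit lies in $\mathcal{K}(X)$; and (b) choosing $\epsilon$ strictly smaller than $\delta/2$ so that $U$ and $V$ really are disjoint. Taking $\epsilon < \delta/3$ gives comfortable slack on both counts.
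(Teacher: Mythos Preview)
Your proposal is correct and follows essentially the same route as the paper's proof: both argue by contradiction, split $\Gamma$ into two disjoint compact pieces at positive distance $\delta$, take $\delta/3$-neighborhoods of the pieces, and use Hausdorff convergence to show that eventually each $\Gamma_n$ is covered by, and meets both of, these disjoint open sets. Your version is arguably a bit cleaner in using the infimum distance between $A$ and $B$ directly rather than the Hausdorff distance, but the structure is identical.
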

\begin{proof}
We prove the contrapositive.

Assume that $\Gamma$ is not connected.
Then there exists open sets $A, B\subseteq X$ such that
$A\cap B =\varnothing$, $\Gamma\cap A\neq \varnothing$,
$\Gamma\cap B\neq \varnothing$, and $\Gamma\subseteq A\cup B$.

Then $(\Gamma\cap A) \cap (\Gamma\cap B) =\varnothing$,
thus $\hd(\Gamma\cap A, \Gamma\cap B) >0$.
Let
\[\delta =\hd(\Gamma\cap A, \Gamma\cap B).\]

Since $\limn\Gamma_n =\Gamma$, let $n_0$ be such that for all $n\geq n_0$,
\[\hd( \Gamma_n, \Gamma) \leq \tfrac{\delta}{3}.\]

It is clear that
\[ (\Gamma\cap A)(\tfrac{\delta}{3}) \cap  \Gamma_n  \neq \varnothing,\]
\[ (\Gamma\cap B)(\tfrac{\delta}{3}) \cap  \Gamma_n  \neq \varnothing,\]
and
\[\Gamma_n \subseteq  (\Gamma\cap A)(\tfrac{\delta}{3})  \cup  (\Gamma\cap B)(\tfrac{\delta}{3}).\]
By the definition of $\delta$,
\[\hd((\Gamma\cap A)(\tfrac{\delta}{3}),  (\Gamma\cap B)(\tfrac{\delta}{3})) \geq \tfrac{\delta}{3}.\]
Thus $\Gamma_n$ is not connected for all $n\geq n_0$.
\end{proof}

\begin{lemma}\label{lm:lengthequality}
Let $\Gamma\in \mathcal{RC}$ and let
$f:[0,1]\rightarrow\Gamma$ be a parametrization of $\Gamma$.
Let
\[ L(\Gamma, \epsilon) =\inf\myset{\CH^1(\Gamma')}{\Gamma' \in \mathcal{RC}
\text{ and }\Gamma'\subseteq \Gamma(\epsilon)
\text{ and } f(0), f(1) \in\Gamma'  }.\]
Then
\[\lim_{\epsilon\rightarrow 0^+} L(\Gamma, \epsilon) = \CH^1(\Gamma).\]
\end{lemma}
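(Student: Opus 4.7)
The plan is to prove the two inequalities separately. The upper bound $\lim_{\epsilon\to 0^+}L(\Gamma,\epsilon)\leq\CH^1(\Gamma)$ is immediate, because $\Gamma$ itself is a valid candidate in the infimum defining $L(\Gamma,\epsilon)$: it belongs to $\mathcal{RC}$, is trivially contained in $\Gamma(\epsilon)$, and contains $f(0),f(1)$. Moreover, $\epsilon\mapsto L(\Gamma,\epsilon)$ is monotone nondecreasing as $\epsilon$ decreases, so the one-sided limit exists and is at most $\CH^1(\Gamma)$.

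For the matching lower bound, I would argue by contradiction. Suppose there is a $\delta>0$ and a sequence $\{\Gamma_k\}_{k\in\Z^+}\subseteq\mathcal{RC}$ with $\Gamma_k\subseteq\Gamma(1/k)$, $f(0),f(1)\in\Gamma_k$, and $\CH^1(\Gamma_k)<\CH^1(\Gamma)-\delta$ for every $k$. Since $\Gamma$ is a compact subset of $\R^n$, the sausage $\Gamma(1)$ is a compact metric space, so Theorem \ref{th:compactspace} makes $(\mathcal{K}(\Gamma(1)),\hd)$ compact. Passing to a subsequence (not relabeled), I may assume $\Gamma_k\to\Gamma^*$ in Hausdorff distance. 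I would then verify that $\Gamma^*=\Gamma$ by checking four things in turn: (i) $\Gamma^*\subseteq\Gamma$, since any $p\in\Gamma^*$ is the limit of some $p_k\in\Gamma_k\subseteq\Gamma(1/k)$ and $\Gamma$ is closed; (ii) $\Gamma^*$ is connected by Lemma \ref{lm:connectedlimit}, each $\Gamma_k$ being a curve; (iii) $\Gamma^*\in\mathcal{RC}$ by Lemma \ref{lm:connected} applied to $\Gamma^*\subseteq\Gamma$; (iv) $f(0),f(1)\in\Gamma^*$, since $f(j)\in\Gamma_k$ for every $k$ and $\hd(\Gamma_k,\Gamma^*)\to 0$ force $f(j)$ to lie in the closed set $\Gamma^*$. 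Lemma \ref{lm:propersubset} then excludes a strict inclusion $\Gamma^*\subsetneq\Gamma$, so $\Gamma^*=\Gamma$.

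Finally, I would apply Theorem \ref{th:liminflength} to $\Gamma_k\to\Gamma$ to conclude $\CH^1(\Gamma)\leq\liminf_{k\to\infty}\CH^1(\Gamma_k)\leq\CH^1(\Gamma)-\delta$, the desired contradiction. The main obstacle is the identification $\Gamma^*=\Gamma$: one must rule out that the minimizing sequence collapses to a proper sub-curve of $\Gamma$ (which could in principle be much shorter). This is exactly what the preparatory Lemmas \ref{lm:propersubset}, \ref{lm:connected}, and \ref{lm:connectedlimit} were arranged to deliver, so once their hypotheses are checked the argument closes by the lower-semicontinuity of $\CH^1$ under Hausdorff convergence furnished by Theorem \ref{th:liminflength}.
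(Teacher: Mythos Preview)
Your proposal is correct and follows essentially the same approach as the paper: both prove the nontrivial inequality by extracting a Hausdorff-convergent subsequence via Theorem~\ref{th:compactspace}, identifying its limit with $\Gamma$ using Lemmas~\ref{lm:connectedlimit}, \ref{lm:connected}, and \ref{lm:propersubset}, and then invoking the lower semicontinuity of $\CH^1$ from Theorem~\ref{th:liminflength}. Your contradiction framing is in fact a bit more streamlined than the paper's, which phrases the same argument via an arbitrary sequence and an appeal to Theorem~\ref{th:4_4}.
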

\begin{proof}
It is clear that
$\lim_{\epsilon\rightarrow 0^+} L(\Gamma, \epsilon) \leq \CH^1(\Gamma)$.
It suffices to show that
$\lim_{\epsilon\rightarrow 0^+} L(\Gamma, \epsilon) \geq \CH^1(\Gamma)$.

Let $\delta>0$.
For each $i\in\N$,
let
\[ S_i =\myset{\Gamma' \in \mathcal{RC} }{
\Gamma' \subseteq \Gamma(\tfrac{1}{i})
\text{ and }
\gamma(0), \gamma(1)\in \Gamma'
},\]
where $\gamma$ is a
parametrization of $\Gamma$.
Note that if $i_2 <i_1$,
then $S_{i_1}\subseteq S_{i_2}$.

Let
$\Gamma_0, \Gamma_1,\dots $ be an arbitrary sequence such that
for all $i\in\N$, $\Gamma_i\in S_{k_i}$,
and $k_0, k_1, \dots \in\N$ is a strictly increasing sequence.

Since for all $i\in\N$, $\Gamma_i$ is compact and connected,
by Theorem \ref{th:compactspace} and Lemma \ref{lm:connectedlimit},
there is at least one  cluster point and every cluster point
is a connected compact set.
Let $\Gamma'$ be a cluster point.
It is clear that $\Gamma'\subseteq \Gamma$.
Then by Lemma \ref{lm:connected},
$\Gamma'\in\mathcal{RC}$.

It is also clear that $\gamma(0), \gamma(1)\in \Gamma'$
by definition of $S_i$.
Thus by Lemma \ref{lm:propersubset},
$\Gamma' =\Gamma$.

By Theorem \ref{th:liminflength},
$\liminfn \CH^1(\Gamma_n)\geq \CH^1(\Gamma') =\CH^1(\Gamma)$.
Then by Theorem \ref{th:4_4}, this implies that  for all sufficiently large $i\in\N$,
\[(\forall \Gamma''\in S_i) \CH^1(\Gamma'') \geq \CH^1(\Gamma)-\delta.\]
Therefore, for all sufficiently large $i\in\N$,
$L(\Gamma,\tfrac{1}{i}) \geq \CH^1(\Gamma)-\delta$.
Since $\delta>0$ is arbitrary,
\[\lim_{\epsilon\rightarrow 0^+} L(\Gamma, \epsilon)\geq \CH^1(\Gamma).\]
\end{proof}

\begin{lemma}\label{lm:extendedlengthequality}
Let $\Gamma\in \mathcal{RC}$ and let
$f:[0,1]\rightarrow\Gamma$ be a parametrization of $\Gamma$.
Let
\[ L(\Gamma, \epsilon, p_1, p_2) =\inf\myset{\CH^1(\Gamma')}{\Gamma' \in \mathcal{RC}
\text{ and }\Gamma'\subseteq \Gamma(\epsilon)
\text{ and } p_1, p_2 \in\Gamma'  }.\]
Then
\[\lim_{\epsilon\rightarrow 0^+} \sup_{p_1, p_2\in \Gamma(\epsilon) }L(\Gamma, \epsilon,p_1,p_2) = \CH^1(\Gamma).\]
\end{lemma}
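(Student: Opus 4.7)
The plan is to prove the matching bounds $\ge \CH^1(\Gamma)$ and $\le \CH^1(\Gamma)$ on the stated limit.

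For the $\ge$ direction I simply choose $p_1 = f(0)$ and $p_2 = f(1)$ in the supremum. Since $f(0), f(1) \in \Gamma \subseteq \Gamma(\epsilon)$ for every $\epsilon > 0$, this choice is admissible, so
\[
\sup_{p_1, p_2 \in \Gamma(\epsilon)} L(\Gamma, \epsilon, p_1, p_2) \;\ge\; L(\Gamma, \epsilon, f(0), f(1)) \;=\; L(\Gamma, \epsilon),
\]
and Lemma~\ref{lm:lengthequality} gives $L(\Gamma, \epsilon) \to \CH^1(\Gamma)$.

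For the $\le$ direction I exhibit, for each $p_1, p_2 \in \Gamma(\epsilon)$, an admissible $\Gamma' \in \mathcal{RC}$ with $\CH^1(\Gamma') \le \CH^1(\Gamma) + 2\epsilon$. For $i = 1, 2$, let $q_i^\ast \in \Gamma$ achieve $d(p_i, \Gamma)$, and let $q_i$ be the first point of $\Gamma$ hit when traveling from $p_i$ along the segment toward $q_i^\ast$; this $q_i$ satisfies $|p_i - q_i| \le \epsilon$ and the half-open segment from $p_i$ to $q_i$ meets $\Gamma$ only at $q_i$. Let $A \subseteq \Gamma$ be the unique subarc from $q_1$ to $q_2$, and set
\[
\Gamma' \;=\; [p_1, q_1] \cup A \cup [q_2, p_2].
\]
Then $\Gamma' \subseteq \Gamma(\epsilon)$ (since every point of each tail is within $\epsilon$ of its $\Gamma$-endpoint, and $A \subseteq \Gamma$), $p_1, p_2 \in \Gamma'$, and
\[
\CH^1(\Gamma') \;\le\; |p_1 - q_1| + \CH^1(A) + |q_2 - p_2| \;\le\; \CH^1(\Gamma) + 2\epsilon.
\]

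The main obstacle is verifying that $\Gamma' \in \mathcal{RC}$, i.e.\ that $\Gamma'$ is a simple curve. By construction, each tail $[p_i, q_i]$ meets $\Gamma$ only at $q_i$, hence meets the subarc $A$ only at that endpoint; and $A$ is itself simple as a subarc of the simple curve $\Gamma$. What can go wrong is that the two tails $[p_1, q_1]$ and $[p_2, q_2]$ might cross each other when the four points $p_1, p_2, q_1, q_2$ happen to lie in a bad configuration. I plan to resolve this by a local perturbation: replace the offending tail(s) by slightly curved paths lying inside a thin tube around the straight segments (still in $\Gamma(\epsilon)$), using that the ambient dimension is at least $2$ to route around any crossing; such perturbations add only $o(\epsilon)$ to the total length. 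Taking $\epsilon \to 0^+$ then yields $\limsup_{\epsilon \to 0^+} \sup_{p_1, p_2 \in \Gamma(\epsilon)} L(\Gamma, \epsilon, p_1, p_2) \le \CH^1(\Gamma)$, which combined with the lower bound completes the proof.
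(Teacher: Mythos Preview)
Your argument follows the same two-step plan as the paper: the lower bound by specializing to $p_1=f(0)$, $p_2=f(1)$ and invoking Lemma~\ref{lm:lengthequality}, and the upper bound by attaching straight tails of length $\le\epsilon$ from each $p_i$ to a nearest point of $\Gamma$. The paper in fact stops at the inequality $L(\Gamma,\epsilon,p_1,p_2)\le 2\epsilon+\CH^1(\Gamma)$ without checking that the implied $\Gamma'$ lies in $\mathcal{RC}$, so your attention to simplicity is, if anything, more careful than the paper's own treatment.

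One caution about your perturbation step: a thin tube around the \emph{entire} tail $[p_i,q_i]$ need not lie in $\Gamma(\epsilon)$ near $p_i$ when $d(p_i,\Gamma)=\epsilon$. This is harmless, though, because any crossing of the two tails at a point $c$ interior to both segments satisfies $d(c,\Gamma)\le|c-q_i|<|p_i-q_i|\le\epsilon$, so $c$ lies in the interior of $\Gamma(\epsilon)$ and the perturbation can be confined to a neighborhood of $c$. Simpler still: if the tails meet at such a $c$, discard $A$ altogether and take $\Gamma'=[p_1,c]\cup[c,p_2]$, a simple polygonal arc contained in the union of the two tails (hence in $\Gamma(\epsilon)$) with length at most $2\epsilon$.
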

\begin{proof}
For every $p\in \Gamma(\epsilon)$, there exists a point $p'\in\Gamma$ such
that $\lVert p, p'\rVert \leq \epsilon$ and line segment $[p, p'] \subseteq \Gamma(\epsilon)$.
Thus it is clear that for all $p_1, p_2 \in\Gamma(\epsilon)$,
$L(\Gamma,\epsilon, p_1,p_2)\leq 2\epsilon +\CH^1(\Gamma)$.
Therefore,
\[\lim_{\epsilon\rightarrow 0^+} \sup_{p_1, p_2\in \Gamma(\epsilon) }L(\Gamma, \epsilon,p_1,p_2) \leq \CH^1(\Gamma).\]

For the other direction, observe that
\[\lim_{\epsilon\rightarrow 0^+} \sup_{p_1, p_2\in \Gamma(\epsilon) }L(\Gamma, \epsilon,p_1,p_2)
\geq \lim_{\epsilon\rightarrow 0^+}L(\Gamma, \epsilon).\]
Applying Lemma \ref{lm:lengthequality} completes the proof.
\end{proof}

\newcommand{\mesh}{\mathrm{mesh}}

\begin{theorem}\label{th:celength}
Let $\Gamma\in \mathcal{RC}$ such that $\Gamma = \gamma([0,1])$,
where $\gamma$ is a continuous function.
(Note that $\gamma$ may not be one-one.)
Let $S(a)=\myset{\gamma(a_i)}{ a_i\in a }$
for all dissection $a$.
Let $\{ a_n\}_{n\in\N}$ be a sequence of
dissections of $\Gamma$ such that
\[\limn \mesh(a_n) =0.\]
Then
\[\limn \CH^1(LMST(a_n)) =\CH^1(\Gamma),\]
where $LMST(a)$ is the longest path inside the Minimum Euclidean Spanning Tree of
$S(a)$.
\end{theorem}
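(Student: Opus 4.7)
The plan is to prove the displayed equality by establishing the upper bound $\limsup_n \CH^1(LMST(a_n)) \leq \CH^1(\Gamma)$ from a short MST comparison and the lower bound $\liminf_n \CH^1(LMST(a_n)) \geq \CH^1(\Gamma)$ via Lemma \ref{lm:lengthequality} applied to a simple arc extracted from the tree.

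For the upper bound, since $\Gamma \in \mathcal{RC}$ it admits a one-to-one arc-length parametrization $\tilde\gamma : [0,\CH^1(\Gamma)] \to \Gamma$. For each $n$, sort the points of $S(a_n)$ by their $\tilde\gamma$-preimages $s_1 < s_2 < \cdots < s_k$ and let $Q_n$ denote the polygonal path through the corresponding points $\tilde\gamma(s_1),\dots,\tilde\gamma(s_k)$ in that order. Because Euclidean distance is bounded by arc length, $\CH^1(Q_n) \leq \sum_i (s_{i+1}-s_i) \leq \CH^1(\Gamma)$. Since $Q_n$ is a spanning tree of $S(a_n)$ and $T_n := \mathrm{MST}(S(a_n))$ has minimum total length, $\CH^1(T_n) \leq \CH^1(Q_n) \leq \CH^1(\Gamma)$; as $LMST(a_n)$ is a subtree of $T_n$, the upper bound follows.

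For the lower bound, let $\omega$ be a modulus of uniform continuity for $\gamma$ and set $\omega_n = \omega(\mesh(a_n))$, so $\omega_n \to 0$. Because MST is also a minimum bottleneck spanning tree and $S(a_n)$ can be spanned by a tree whose every edge has length at most $\omega_n$ (built from consecutive $\gamma$-samples), every edge of $T_n$ has length at most $\omega_n$, and hence $T_n \subseteq \Gamma(\omega_n)$. Let $e_0, e_1$ be the two endpoints of $\Gamma$, and pick samples $q_i^{(n)} \in S(a_n)$ with $|q_i^{(n)} - e_i| \leq \omega_n$, possible because $\gamma^{-1}(\{e_0,e_1\})$ is nonempty and $\gamma$ is uniformly continuous. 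Let $R_n$ be the unique path in the tree $T_n$ joining $q_0^{(n)}$ and $q_1^{(n)}$; since $LMST(a_n)$ is the longest path in $T_n$, $\CH^1(R_n) \leq \CH^1(LMST(a_n))$. Form the compact connected polygonal set
\[
X_n \;=\; R_n \cup [q_0^{(n)}, e_0] \cup [q_1^{(n)}, e_1] \;\subseteq\; \Gamma(\omega_n).
\]
Since $X_n$ is a finite union of line segments it is a Peano continuum, so it contains a simple arc $\Gamma'_n \in \mathcal{RC}$ from $e_0$ to $e_1$, with $\CH^1(\Gamma'_n) \leq \CH^1(X_n) \leq \CH^1(R_n) + 2\omega_n$. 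Lemma \ref{lm:lengthequality} (used with a parametrization $f$ of $\Gamma$ with $f(0)=e_0$, $f(1)=e_1$) then gives $\CH^1(\Gamma'_n) \geq L(\Gamma, \omega_n) \to \CH^1(\Gamma)$, and combining the two inequalities yields $\CH^1(LMST(a_n)) \geq \CH^1(\Gamma) - 2\omega_n - o(1)$, whence the lower bound.

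The main obstacle is the middle step of the lower bound: extracting from the polygonal set $X_n$ a simple rectifiable arc $\Gamma'_n$ connecting $e_0$ and $e_1$ with length controllable by $\CH^1(X_n)$. This relies on the standard topological fact that a finite union of line segments in $\R^n$ is a Peano continuum, so any two of its points are joined by a simple arc inside it (Mazurkiewicz's arc theorem), together with the monotonicity of $\CH^1$. A secondary, implicit subtlety is the identification of $\CH^1$ of a graph-simple polygonal path with its total edge length, which is valid whenever distinct edges do not share a positive-length collinear piece; for edges of an MST this is automatic by the cycle property.
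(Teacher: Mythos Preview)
Your argument is correct and follows essentially the same two-step strategy as the paper: bound $\CH^1(LMST(a_n))$ above by comparing the MST to the polygonal path through the samples ordered along a one-to-one parametrization of $\Gamma$, and bound it below by showing the MST lies in a shrinking Minkowski sausage $\Gamma(\cdot)$ and then invoking the length lemma. The only noteworthy differences are packaging: where the paper proves an explicit edge-swap \emph{Claim} to force every MST edge into $\Gamma(w)$, you invoke the minimum-bottleneck property of MSTs to the same effect; where the paper appeals to Lemma~\ref{lm:extendedlengthequality} with the endpoints $p_0,p_1$ of $\Gamma$, you use the slightly simpler Lemma~\ref{lm:lengthequality} after explicitly adjoining short segments to reach $e_0,e_1$; and you are more careful than the paper in extracting a \emph{simple} arc (via the arc theorem for Peano continua) before applying the lemma, a point the paper leaves implicit. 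Your final remark that distinct MST edges cannot share a positive-length collinear piece (so that $\CH^1$ of a tree path equals its edge-sum) is correct and is indeed what makes the inequality $\CH^1(R_n)\le \CH^1(LMST(a_n))$ go through; the paper tacitly relies on the same fact.
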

\begin{proof}
For all $n\in\N$, let
\[\epsilon_n = 2 \hd(\Gamma, S(a_n)).\]
Note that since $\gamma$ is uniformly continuous
and $\limn \mesh(a_n) =0$,
$\limn \epsilon_n =0$.

Let $w = 2\epsilon_n$.

\begin{claim}
Let $T$ be a Euclidean Spanning Tree of
$S(a)$. If $T$ has an edge that is not inside $\Gamma(w)$,
then $T$ is not a minimum spanning tree.
\end{claim}
\begin{proof}[Proof of Claim]
Let $E$ be an edge of $T$ such that $E\nsubseteq \Gamma(w)$.
Then $\CH^1(E)>2w$.
Removing $E$ from $T$ will break $T$ into two subtrees $T_1$, $T_2$.
By the definition of $\epsilon_n$ and the continuity of $\gamma$,
there exists $s_1, s_2\in S(a)$ with $\lVert s_1 -s_2\rVert\leq \epsilon_n$
such that $s_1 \in T_1$ and $s_2\in T_2$.

It is clear that $T_1\cup T_2 \cup \{ (s_1, s_2)\}$ is also
a Euclidean Spanning Tree of $S(a)$ and
$\CH^1(T_1\cup T_2 \cup \{ (s_1, s_2)\}) < \CH^1(T)$,
i.e., $T$ is not minimum.
\end{proof}

Let $T$ be a Minimum Euclidean Spanning Tree of $S(a)$.
Let $L$ be the longest path inside $T$.
Then $L\subseteq T \subseteq \Gamma(w)$.

Note that $\CH^1(L)\leq \CH^1(\Gamma)$.

Let $p_0, p_1$ be the two endpoints of $\Gamma$.

Since $L$ is the longest path inside $T$
and $p_0$, $p_1$ are each within $\epsilon_n$ distance
to some point in $S(a_n)$,
\[L(\Gamma, w, p_0,p_1)\leq 2\epsilon_n+\CH^1(L).\]

By Lemma \ref{lm:extendedlengthequality},
\[\lim_{w\rightarrow 0^+} L(\Gamma, w,p_0,p_1) =\CH^1(\Gamma).\]
Then
\[\limn \CH^1(LMST(a_n)) =\CH^1(\Gamma).\]
\end{proof}
This result implies that when the sampling
density is high, the number of leaves in the minimum spanning
tree is asymptotically smaller than the total number of
nodes.

We now have the machinery to prove the main result of this section.

\begin{theorem}\label{th:celengthmain}
Let $\gamma:[0,1]\rightarrow \R^n$ be computable such that $\Gamma=\gamma([0,1])\in\mathcal{RC}$.
Then $\CH^1(\Gamma)$ is lower semicomputable.
\end{theorem}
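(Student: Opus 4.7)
The plan is to use the cautious polygonal approximator $\pi_{M,m}$ of Construction \ref{con:4_1} to produce, for each precision parameter $r$, a computable real $\lambda_r$ such that $\lambda_r - 2^{-r} \le \mathcal{H}^1(\Gamma)$ and $\lambda_r \to \mathcal{H}^1(\Gamma)$. Then the set $\{q \in \Q : q < \mathcal{H}^1(\Gamma)\}$ is c.e.\ by semidecidably searching for an $r$ with $\lambda_r - 2^{-r} > q$, which is exactly lower semicomputability (equivalently, $\mathcal{H}^1(\Gamma) = \sup_r(\lambda_r - 2^{-r})$ is a sup of a computable sequence of computable reals).

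Fix an oracle machine $M$ computing $\gamma$ and a computable modulus of uniform continuity $m:\N\to\N$ for $\gamma$; both exist by hypothesis. Running $\pi_{M,m}(r)$ samples $[0,1]$ at the $N+1 = 2^{m(r)}+1$ equally spaced points $a_i = i\cdot 2^{-m(r)}$, produces rational points $x_0,\ldots,x_N \in \Q^n$ with $|x_i - \gamma(a_i)| \le \delta := 2^{-(r+m(r)+1)}$, and outputs a longest path $L_r$ inside a Euclidean minimum spanning tree of $\{x_0,\ldots,x_N\}$. I set $\lambda_r := \mathcal{H}^1(L_r)$, which is computable from the $x_i$'s.

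The main obstacle is the one-sided bound $\lambda_r \le \mathcal{H}^1(\Gamma) + 2^{-r}$. To prove it, I first argue that the Euclidean MST $T$ of the \emph{exact} samples $\gamma(a_0),\ldots,\gamma(a_N)$ satisfies $\mathcal{H}^1(T) \le \mathcal{H}^1(\Gamma)$: because $\Gamma$ is simple, these samples admit a natural linear ordering along $\Gamma$, and the polygonal chord tree joining them in that order is a spanning tree whose chord lengths are each bounded by the corresponding arc length of $\Gamma$, totalling at most $\mathcal{H}^1(\Gamma)$. Transferring the edge set of $T$ to the perturbed samples $\{x_i\}$ increases each edge length by at most $2\delta$ via the triangle inequality, so the transferred tree has length at most $\mathcal{H}^1(T) + N\cdot 2\delta = \mathcal{H}^1(T) + 2^{-r}$. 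Since the MST of $\{x_i\}$ is no longer than this transferred tree and $L_r$ is no longer than the MST, the bound follows.

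For the convergence $\lim_r \lambda_r = \mathcal{H}^1(\Gamma)$, I apply Theorem \ref{th:celength} to the dissections $a^{(r)} = (a_0,\ldots,a_N)$ of $[0,1]$, whose parameter meshes $2^{-m(r)}$ vanish and, by uniform continuity of $\gamma$, induce image meshes in $\Gamma$ that also vanish; this yields $\mathcal{H}^1(\mathit{LMST}(a^{(r)})) \to \mathcal{H}^1(\Gamma)$. The same $N\cdot 2\delta = 2^{-r}$ perturbation bound, applied symmetrically (transferring the edge set of the MST of $\{x_i\}$ back to $\{\gamma(a_i)\}$), relates the exact-sample LMST length to $\lambda_r$ up to additive error $2^{-r}$, so $\lambda_r \to \mathcal{H}^1(\Gamma)$ as well. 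Combining the two bounds, $(\lambda_r - 2^{-r})_{r\in\N}$ is a computable sequence of reals each at most $\mathcal{H}^1(\Gamma)$ and converging to it, which witnesses the lower semicomputability of $\mathcal{H}^1(\Gamma)$.
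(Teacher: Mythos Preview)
Your overall strategy matches the paper's: use $\pi_{M,m}$ to compute $\lambda_r=\mathcal{H}^1(L_r)$, show $\lambda_r-2^{-r}\le\mathcal{H}^1(\Gamma)$, and show $\lambda_r\to\mathcal{H}^1(\Gamma)$. Your upper bound is correct and essentially identical to the paper's (the paper transfers the polygonal chord tree in the induced $\Gamma$-order directly to the perturbed points, you first pass through the exact-sample MST; either works, and the $N\cdot 2\delta=2^{-r}$ arithmetic is right).

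The gap is in your convergence step. Transferring the edge set of the MST of $\{x_i\}$ back to $\{\gamma(a_i)\}$ only shows that the \emph{total MST lengths} of the two point sets differ by at most $2^{-r}$. It does \emph{not} show that the \emph{longest-path-in-MST} lengths differ by at most $2^{-r}$: under a small perturbation of the points, the combinatorial structure of the MST can change, and the longest path in the new tree need not correspond to the longest path in the old one. So the inequality $|\lambda_r-\mathcal{H}^1(\mathit{LMST}(a^{(r)}))|\le 2^{-r}$ that you assert is unjustified.

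The paper does not attempt this comparison. Instead it observes that the perturbed sample set $S_r$ itself satisfies $\hd(\Gamma,S_r)\to 0$ (each $x_i$ is within $\delta$ of a point of $\Gamma$, and the $\gamma(a_i)$ are $2^{-r}$-dense in $\Gamma$), so the \emph{proof} of Theorem~\ref{th:celength}---the claim that every MST edge lies in $\Gamma(w_r)$ plus the application of Lemma~\ref{lm:extendedlengthequality}---goes through verbatim for $S_r$, yielding $\mathcal{H}^1(\mathit{LMST}(S_r))\to\mathcal{H}^1(\Gamma)$ directly. You can repair your argument the same way: drop the perturbation comparison and apply the argument of Theorem~\ref{th:celength} to $S_r$ rather than to the exact samples.
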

\begin{proof}
Let the function $f$, $M$, and $m$ in Construction \ref{con:4_1} be $\gamma$,
a computation of $\gamma$, and its computable modulus respectively.

For each input $r\in\N$, $\pi_{M,m}(r)$ is the longest path $L_r$ in
$MST(S_r)$, where $S_r$ is the set of points sampled by $\pi_{M,m}(r)$.

Let $l_r = \CH^1(L_r) -2^{-r}$.
Note that $l_r$ is computable from $r\in\N$.

We show that for all $r\in\N$, $l_r \leq \CH^1(\Gamma)$ and
$\lim_{r\rightarrow\infty} l_r = \CH^1(\Gamma)$.

Let $\tilde{f}$ be a one-one parametrization of $\Gamma$.
Let $\pi:\{0, \dots, 2^{m(r)}\}\rightarrow \{0, \dots, 2^{m(r)}\}$
be a permutation of $\{0, \dots, 2^{m(r)}\}$ such that
for all $i, j \in \{0, \dots, 2^{m(r)}\}$,
\[i < j\implies \tilde{f}^{-1}( f(a_{\pi(i)}) ) < \tilde{f}^{-1}( f(a_{\pi(j)}) ).\]

Let $\hat{\Gamma}_r$ be
the polygonal curve connecting
the points $f(a_{\pi(0)}), f(a_{\pi(1)}),\dots, f(a_{\pi(2^{m(r)})})$  in order.
Then $\hat{\Gamma}_r$ is a polygonal approximation of $\Gamma$
and $\CH^1(\hat\Gamma_r) \leq \CH^1(\Gamma)$.

Let $\bar{\Gamma}_r$ be the polygonal curve connecting
the points in $S_r$  in the order of $x_{\pi(0)}, x_{\pi(1)}, \dots, x_{\pi(2^{m(r)})}$.

Due to the approximation induced by the computation in Construction \ref{con:4_1},
\[\CH^1(\bar{\Gamma}_r) \leq \CH^1(\hat{\Gamma}_r) + 2^{-r}.\]
Then it is clear that
\[\CH^1(L_r )=\CH^1( LMST(S_r) )\leq\CH^1(\bar{\Gamma}_r)   \leq \CH^1(\hat{\Gamma}_r) + 2^{-r}.\]
Thus
\[l_r \leq \CH^1(\hat{\Gamma}_r) .\]

Let $\hat{S}_r =\{f(a_0), f(a_1), \dots, f(a_{2^{m(r)}})\}$. Note that $\hat{S}_r$ may be
a multi-set.
By Theorem \ref{th:celength},
\[\lim_{r\rightarrow \infty} LMST(\hat{S}_r) =\CH^1(\Gamma).\]

Let
\[\epsilon_r =2\hd(\Gamma, S_r).\]
By Contruction \ref{con:4_1},
\[\lim_{r\rightarrow \infty} \epsilon_r =0.\]
Let $w_r=2\epsilon_r$.

Let $T_r$ be a Minimum Euclidean Spanning Tree of $S_r$.
Let $L_r$ be the longest path inside $T_r$.
By the Claim in Theorem \ref{th:celength},
$L\subseteq T\subseteq \Gamma(w_r)$.

By an essentially identical argument as the one in the proof of Theorem \ref{th:celength},
\[\lim_{r\rightarrow\infty } l_r =\lim_{r\rightarrow\infty } \CH^1(LMST(S_r)) =\CH^1(\Gamma),\]
which completes the proof.
\end{proof}

\end{section} 
\begin{section}{$\Delta_2^0$-Computability of the Constant-Speed Parametrization}\label{se:5}

In this section we prove that every computable curve $\Gamma$ has a constant speed parametrization that is
$\Delta_2^0$-computable.

\begin{theorem}\label{th:5_1}
Let $\Gamma=\gamma^*([0,1])\in\mathcal{RC}$. ($\gamma^*$ may not be one-one.)
Let $l=\CH^1(\Gamma)$ and $O_l$ be an oracle such that
for all $n\in\N$, $\lvert O_l(n) - l\rvert \leq 2^{-n}$.
Let $f$ be a computation of $\gamma^*$ with modulus $m$.
Let $\gamma$ be the constant speed parametrization of $\Gamma$.
Then $\gamma$ is computable with oracle $O_l$.
\end{theorem}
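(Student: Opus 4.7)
The plan is to approximate $\gamma(t)$ by building dense polygonal approximations of $\Gamma$ via the minimum-spanning-tree construction of Construction~\ref{con:4_1}, then walking along the approximation by arclength. The oracle $O_l$ serves both to refine our length estimates and to certify when the approximation is close enough to $\Gamma$ for the walk to give the right answer.

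On input $r\in\N$ and rational $t\in[0,1]$, I iterate over stages $k=1,2,\ldots$. At stage~$k$: apply the cautious polygonal approximator to obtain samples $S_k$ and the longest path $L_k=LMST(S_k)$; query $O_l(k)$ for $l_k$ with $|l_k-l|\leq 2^{-k}$; compute $f(0)$ to sufficient precision, identify the endpoint of $L_k$ closer to $f(0)$, and orient $L_k$ from that endpoint; then compute the point $p_k\in L_k$ at arclength $t\cdot\CH^1(L_k)$ along $L_k$. All of these steps are computable from $f$, $m$, and $O_l$. The algorithm halts and outputs $p_k$ at the first stage whose certificate (below) guarantees $|p_k-\gamma(t)|<2^{-r}$.

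Let $\gamma_k:[0,1]\to L_k$ denote the resulting oriented constant-speed parametrization, so $p_k=\gamma_k(t)$. Correctness rests on the key lemma that $\gamma_k\to\gamma$ uniformly on $[0,1]$. Each $\gamma_k$ is $\CH^1(L_k)$-Lipschitz and $\CH^1(L_k)\to l$ by Theorem~\ref{th:celength}, so the family is uniformly equicontinuous and Arzel\`a--Ascoli gives, along any subsequence, a uniformly convergent sub-subsequence with $l$-Lipschitz limit $\tilde\gamma:[0,1]\to\R^n$. The containment $L_k\subseteq\Gamma(w_k)$ with $w_k=2\hd(\Gamma,S_k)\to 0$ (as in the proof of Theorem~\ref{th:celength}) forces $\tilde\gamma([0,1])\subseteq \Gamma$, and the oriented endpoint convergence gives $\tilde\gamma(0)=\gamma(0)$ and $\tilde\gamma(1)=\gamma(1)$. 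The image is a connected compact subset of $\Gamma$ containing both endpoints, so Lemmas~\ref{lm:propersubset} and~\ref{lm:connected} together force $\tilde\gamma([0,1])=\Gamma$. Being $l$-Lipschitz and surjecting onto a curve of $\CH^1$-measure $l$ bounds the total variation of $\tilde\gamma$ above and below by $l$, so any $1$-fold retracing would contribute more than $l$ of total variation; hence $\tilde\gamma$ is non-retracing with $|\tilde\gamma'|=l$ almost everywhere, pinning $\tilde\gamma$ to the constant-speed parametrization $\gamma$. The subsequence principle upgrades this to uniform convergence $\gamma_k\to\gamma$.

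The main obstacle is turning this convergence into an effective halting criterion. I would prove a quantitative version of the key lemma: for every $\epsilon>0$ there exist computable thresholds $w_0(\epsilon),\delta_0(\epsilon)>0$, produced from $\epsilon$, from $m$, and from an upper bound on $l$ supplied by $O_l$, such that any polygonal $L$ with $L\subseteq\Gamma(w_0(\epsilon))$, endpoints within $w_0(\epsilon)$ of those of $\Gamma$, and $|\CH^1(L)-l|\leq\delta_0(\epsilon)$ satisfies $\sup_t|\gamma_L(t)-\gamma(t)|\leq\epsilon$. The existence of such moduli follows from the compactness argument above packaged by contradiction; the delicate point is extracting $w_0,\delta_0$ computably without direct access to $\Gamma$. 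Given this, the algorithm halts at the first stage $k$ at which the sampling mesh (bounded via $m$) certifies $w_k\leq w_0(2^{-r})$ and the oracle certifies $|\CH^1(L_k)-l|\leq\delta_0(2^{-r})$, and outputs the corresponding $p_k$.
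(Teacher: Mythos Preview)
Your argument has a genuine gap at exactly the place you flag as ``the delicate point.'' An Arzel\`a--Ascoli subsequence argument establishes that $\gamma_k\to\gamma$ uniformly, but it is inherently non-effective: packaging it ``by contradiction'' tells you that thresholds $w_0(\epsilon),\delta_0(\epsilon)$ \emph{exist}, not that they are computable from $\epsilon$, $m$, and $O_l$. Without a computable halting certificate your machine does not know when to stop, and the theorem is precisely about computability. You acknowledge this but do not resolve it, so the proof is incomplete as written. A secondary gap is the claim of ``oriented endpoint convergence'': it is not automatic that the two leaves of $LMST(S_k)$ picked out by the longest path are near the two endpoints of $\Gamma$; this requires its own argument.

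The paper's proof takes a fundamentally different, quantitative route that avoids compactness entirely. It works inside the polygonal Minkowski sausage $\mathcal{S}$ of width $2d_k$ around $LMST(a)$, using the intrinsic shortest-path metric $d_{\mathcal{S}}$. The oracle $O_l$ is used to recognize when $\CH^1(LMST(a))>\CH^1(\Gamma)-\delta/2$ and when the $d_{\mathcal{S}}$-distance between the endpoints of $LMST(a)$ is at least $\CH^1(\Gamma)-\delta/2$; both conditions are checkable (the latter via a Euclidean shortest-path algorithm). From these two certified inequalities the paper derives, by direct chaining of triangle inequalities in $d_{\mathcal{S}}$, first that the endpoints of $LMST(a)$ lie within $6\delta$ of $\gamma(0),\gamma(1)$, and then that $\lVert f_k(t)-\gamma(t)\rVert\le 12\delta$ for every $t$. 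Taking $\delta=2^{-(k+4)}$ gives the required $2^{-k}$ precision with an explicit, computable stopping rule. In short, the paper replaces your soft compactness step with an explicit inequality $\Delta_k\le 12\delta$ whose hypotheses are algorithmically verifiable using $O_l$; this is exactly the missing ingredient in your outline.
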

\begin{proof}
On input $k$ as the precision parameter for computation of the curve
and a rational number $x\in [0,1]\cap \Q$, we output a point $f_k(x) \in \R^n$
such that
$|f_k(x) - \gamma(x) |\leq 2^{-k}$.

Without loss of generality, assume that $\CH^1(\Gamma)> 1000 \cdot 2^{-k}$.

Let $\delta = 2^{-(4+k)}$.

Run $f$ as in Construction \ref{con:4_1} with increasingly larger precision parameter $r> - \log \delta$
until
\[\CH^1(LMST(a))> \CH^1(\Gamma)-\tfrac{\delta}{2}\]
and the shortest distance between the two endpoints of
$LMST(a)$ inside the polygonal sausage around $LMST(a)$ with
width $2d=2\cdot 2^{-r}$ is at least $\CH^1(\Gamma) -\tfrac{\delta}{2}$.
This can be achieved by using Euclidean shortest path algorithms \cite{Kapoor:EAESPVPPO,Hershberger:OAESPP}.

Let $d_k\leq  2^{-(4+k)}$ be the largest $d$ such that the above conditions are
satisfied, which is assured by Theorem \ref{th:celengthmain} and Lemma \ref{lm:extendedlengthequality}.
Let $\mathcal{S}$ be the polygonal sausage around $LMST(a)$ with
width $2d_k$.

For $p_1, p_2 \in \mathcal{S}$,
let
$d_{\mathcal{S}}(p_1, p_2)=\text{the shortest
distance between $p_1$ and $p_2$ inside $\mathcal{S}$}$.
Note that $\mathcal{S}$ is connected.

Let $f_k$ be the constant speed parametrization of $LMST(a)$
and $\gamma$ be the constant speed parametrization of $\Gamma$.
Without loss of generality, assume that
$\lVert \gamma(0)-f_k(0)\rVert < \lVert \gamma(1)-f_k(0)\rVert$
and
$\lVert \gamma(1)-f_k(1)\rVert < \lVert \gamma(0)-f_k(1)\rVert$,
since we can hardcode approximate locations of $\gamma(0)$ and $\gamma(1)$
such that when $d_k$ is sufficiently small, we can decide
wehther a sampled point is closer to $\gamma(0)$ or $\gamma(1)$.
As we now prove
\[\lim_{k\rightarrow\infty} \{f_k(0), f_k(1)\}  = \{\gamma(0), \gamma(1)\}.\]

Note that for each $s\in S$ such that $s\notin LMST(a)$, there exists
$p\in LMST(a)\cap S$ such that
the shortest path from $s$ to $p$ in $MST(a)$
has length less than $\tfrac{\delta}{2}$, i.e.,
$d_{MST(a)}(s, p) < \tfrac{\delta}{2}$,
since $\CH^1(LMST(a)) > \CH^1(\Gamma) -\tfrac{\delta}{2}$
and $\CH^1(MST(a)) \leq \CH^1(\Gamma)$.

Let
$\delta_0 =d_{\mathcal{S}}( \gamma(0), f_k(0) )$.
Let $s_0$ be the closest point to $\gamma(0)$ in $S\cap LMST(a)$.
Then
$d_{\mathcal{S}} ( \gamma(0),  s_0) \leq \tfrac{\delta}{2}+d_k$.
Then
$d_{LMST(a)}( s_0,  f_k(0)) \geq\delta_0 -\tfrac{\delta}{2}-d_k$.
Since $s_0\in S\cap LMST(a)$ and we assume $\CH^1(\Gamma)> 1000 \cdot 2^{-k}$,
\[d_{\mathcal{S}}( s_0,\gamma(1))
\leq  \CH^1(LMST(a)) -\delta_0+\tfrac{\delta}{2} + d_k + \tfrac{\delta}{2} +d_k
=\CH^1(LMST(a)) -\delta_0+\delta + 2d_k .\]
Then
\begin{align*}
d_{\mathcal{S}}( \gamma(0),\gamma(1))
&\leq \CH^1(LMST(a)) -\delta_0+\delta + 2d_k  + \tfrac{\delta}{2}+d_k\\
& <\CH^1(LMST(a)) -\delta_0 +\tfrac{3\delta}{2} + 3d_k.\\
\end{align*}
And hence
\begin{equation}\label{eq:a4_1}
d_{\mathcal{S}}( \gamma(0),\gamma(1))
\leq \CH^1(\Gamma)  -\delta_0 +2\delta + 3d_k.
\end{equation}

By the choice of $d_k$,
we have that
$ d_{\mathcal{S}}( f_k(0), f_k(1)) \geq \CH^1(\Gamma) -\tfrac{\delta}{2}$.
Now, note that for any two points $p_1, p_2\in \Gamma$,
\[d_{\mathcal{S}}(p_1, p_2)\leq  \frac{\CH^1(\Gamma) + d_{\mathcal{S}}(\gamma(0), \gamma(1))}{2},\]
since we can put them in half of a loop.
Therefore
\[ d_{\mathcal{S}}( f_k(0), f_k(1)) \leq \frac{\CH^1(\Gamma) + d_{\mathcal{S}}(\gamma(0), \gamma(1))}{2}.\]
Thus
\begin{equation}\label{eq:a4_2}
d_{\mathcal{S}}( \gamma(0), \gamma(1)) \geq \CH^1(\Gamma) -\delta .
\end{equation}
By \eqref{eq:a4_1} and \eqref{eq:a4_2}, we have
\begin{equation}
\delta_0\leq 3\delta+3d_k \leq 6\delta < 2^{-k},
\end{equation}
i.e.,
\begin{equation}
\lVert f_k(0) - \gamma(0)  \rVert\leq d_{\mathcal{S}}(f_k(0), \gamma(0)) \leq 6\delta  < 2^{-k}.
\end{equation}
Similarly,
\begin{equation}
\lVert f_k(1) - \gamma(1)  \rVert \leq d_{\mathcal{S}}(f_k(1), \gamma(1)) \leq 6\delta< 2^{-k}.
\end{equation}

Now we proceed to show that for all $t\in (0,1)$,
$\lVert f_k(t) - \gamma(t) \rVert < 10\delta$
with $f(0)$ being at most $6\delta$ from
$\gamma(0)$ inside $\mathcal{S}$ and   $f(1)$ being at most $6\delta$ from
$\gamma(1)$ inside $\mathcal{S}$.

Let $\Delta_k = \lVert f_k(t) -\gamma(t) \rVert$.

Let $s_f \in S\cap LMST(a)$ be such that $| f_k^{-1}(s_f)  -  t |$
is minimized.
Then
$d_{LMST(a)}(f_k(t) ,  s_f)\leq d_k$,
since every edge in $MST(a)$ is at most $d_k$ long.

Let $s'_\gamma \in S\cap  \Gamma$ be such that $ |\gamma^{-1}(s'_\gamma) - t  | $
is minimized. Then
$d_{\Gamma}(\gamma(t), s'_\gamma)\leq d_k$,
since we sample $S$ using $d_k$ as the density parameter.

Let $s_\gamma\in S\cap LMST(a)$
such that $d_{MST(a)}(s_\gamma, s'_\gamma)$ is minimized.
Then
$d_{MST(a)}(s_\gamma, s'_\gamma)\leq \tfrac{\delta}{2}$,
since $\CH^1(MST(a)) \geq \CH^1(\Gamma)-\tfrac{\delta}{2}$.

Then
$\lVert f_k(t) - s_\gamma\rVert\geq \Delta_k  - (\tfrac{\delta}{2}+d_k) =\Delta_k-\tfrac{\delta}{2}-d_k$.

Note that
$d_{LMST(a)}(s_f , s_\gamma)\geq \lVert s_f - s_\gamma\rVert\geq \Delta_k-\tfrac{\delta}{2}-2d_k$.

Without loss of generality, assume that
distance from $s_\gamma$ to $f_k(0)$ along $LMST(a)$
is $\Delta_k-\tfrac{\delta}{2}-d_k$ more than the distance
from $f_k(t)$ to $f_k(0)$. Otherwise, we simply
look from the $\gamma(1)$ and $f_k(1)$ side instead.

The path traced by $\gamma$ from $\gamma(0)$ to $\gamma(t)$ has length $t \cdot \CH^1(\Gamma)$.

The shortest distance between $\gamma(t)$ to $s_\gamma$
inside $\Gamma \cup MST(a)$ is at most $d_k+\tfrac{\delta}{2}$.

The path traced by $f_k$ from $s_\gamma$ to $f_k(1)$ has length
\begin{align*}
d_{LMST(a)}(s_\gamma, f_k(1) )
&\leq
\CH^1(LMST(a))
-[t(\CH^1(\Gamma) -\tfrac{\delta}{2} ) -d_k +\Delta_k-\tfrac{\delta}{2}-d_k].
\end{align*}

The shortest distance from $\gamma(1)$ to $f_k(1)$ inside
$\mathcal{S}$ is at most $6\delta $.

Then the distance from $\gamma(0)$ to $\gamma(1)$
inside $\mathcal{S}$ is at most
\begin{align*}
&t\cdot \CH^1(\Gamma)
+ d_k+\tfrac{\delta}{2}
+ \CH^1(LMST(a)) -[t(\CH^1(\Gamma)-\tfrac{\delta}{2}) -d_k+\Delta_k-\tfrac{\delta}{2}-d_k]+6\delta\\
&\leq \CH^1(LMST(a)) +3d_k +8\delta -\Delta_k\\
&\leq \CH^1(\Gamma)  +11\delta -\Delta_k.
\end{align*}
By \eqref{eq:a4_2},
we have
\[\Delta_k\leq 12\delta < 2^{-k}.\]
\end{proof}

\begin{corollary}\label{co:5_2}
Let $\Gamma$ be a curve with the property described in property 5 of Theorem \ref{th:3_2}.
Then the length of $\Gamma$ -- $\CH^1(\Gamma)$ is not computable.
\end{corollary}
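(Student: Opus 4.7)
The plan is to prove the corollary by contradiction, leveraging Theorem \ref{th:5_1} and the must-be-retraced property of $\Gamma$. The core observation is that if $\CH^1(\Gamma)$ were computable, then the oracle $O_l$ required by Theorem \ref{th:5_1} could be supplied for free, yielding a genuinely computable parametrization of $\Gamma$. But that parametrization is the constant-speed parametrization, which is one-to-one and hence non-retracing, in direct conflict with the property that every computable parametrization of $\Gamma$ must have unbounded retracing.

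Concretely, I would argue as follows. Suppose for contradiction that $l = \CH^1(\Gamma)$ is a computable real number. Then there is a computable function $O_l: \N \to \Q$ with $|O_l(n) - l| \leq 2^{-n}$ for all $n \in \N$. The curve $\Gamma$ has a computable parametrization $\gamma^*$ (in fact a polynomial-time one, by property \ref{pr:1_4} of the main theorem, which is part of the ``must be retraced'' package), so the hypotheses of Theorem \ref{th:5_1} are met. Applying that theorem with $O_l$ as oracle yields a computable algorithm that, on input a rational $x \in [0,1] \cap \Q$ and precision parameter $k \in \N$, outputs a point within $2^{-k}$ of $\gamma(x)$, where $\gamma$ is the constant-speed parametrization of $\Gamma$. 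Hence $\gamma$ itself is computable in the sense of Section \ref{se:2}.

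However, the constant-speed parametrization of a simple rectifiable curve is one-to-one, as recalled in the introduction. It is therefore non-retracing, admitting no $m$-fold retracing even for $m=1$. This flatly contradicts the must-be-retraced property of $\Gamma$, namely part~4 of Theorem \ref{th:3_2} (corresponding to property~5 in the introduction's list), which asserts that every computable parametrization of $\Gamma$ has unbounded retracing. Therefore $\CH^1(\Gamma)$ cannot be computable.

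I do not anticipate any real obstacle: the corollary is essentially a packaging of the main theorem together with Theorem \ref{th:5_1}, and the only step worth stating explicitly is the trivial observation that computability of the real $\CH^1(\Gamma)$ furnishes a computable oracle of the type $O_l$ required by Theorem \ref{th:5_1}.
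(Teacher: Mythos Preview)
Your proposal is correct and follows essentially the same route as the paper: assume $\CH^1(\Gamma)$ is computable, feed the resulting computable oracle into Theorem~\ref{th:5_1} to obtain a computable constant-speed parametrization, and observe that this contradicts the must-be-retraced property. The paper states it as a contrapositive rather than a contradiction and is terser about why the constant-speed parametrization fails to retrace, but the argument is the same.
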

\begin{proof}
We prove the contrapositive.
Let $\Gamma$ be a curve with a computable parametrization with a computable length $\CH^1(\Gamma)$.
Then by Theorem \ref{th:5_1}, we can use the Turing machine that computes $\CH^1(\Gamma)$
as the oracle in the statement of Theorem \ref{th:5_1} and obtain a Turing machine that
computes the constant speed parametrization of $\Gamma$.
Therefore, $\Gamma$ does not have the property described in item 5 of Theorem \ref{th:3_2}.
\end{proof}
\end{section} 
\begin{section}{Conclusion}\label{se:6}
As we have noted, Ko \cite{Ko95} has proven the
existence of computable curves with finite,
but uncomputable lengths, and the curve $\mathbf{\Gamma}$
of our main theorem is one such curve.
In the recent paper \cite{Gu:PCC}, we have given
a precise characterization of those points in
$\R^n$ that lie on computable curves of
finite length. With these things in mind,
we pose the following.

{\bf Question.} Is there a point $x\in\R^n$ such that $x$ lies on a computable
curve of finite length but not on any computable curve of computable length?
\end{section} 

\begin{ack}
We thank anonymous referees for their valuable comments.
\end{ack}

\bibliographystyle{abbrv}
\bibliography{dim,rbm,main,random,dimrelated}

\end{document}